\pgfplotsset{compat=1.18}
\title{Map-Matching Queries under Fréchet Distance on Low-Density Spanners}
\author{Kevin Buchin}{Department of Computer Science, TU Dortmund,
Germany\and\url{https://ls11-www.cs.tu-dortmund.de/staff/buchin}}{%
kevin.buchin@tu-dortmund.de}{https://orcid.org/0000-0002-3022-7877}{}
\author{Maike Buchin}{Faculty of Computer Science, Ruhr-Universität Bochum,
Germany\and\url{https://informatik.rub.de/buchin/}}{maike.buchin@rub.de}{%
https://orcid.org/0000-0002-3446-4343}{}
\author{Joachim Gudmundsson}{School of Computer Science, University of Sydney,
Australia\and\url{https://www.sydney.edu.au/engineering/about/our-people/academic-staff/joachim-gudmundsson.html}}{%
joachim.gudmundsson@sydney.edu.au}{https://orcid.org/0000-0002-6778-7990}{}
\author{Aleksandr Popov}{Department of Mathematics and Computer Science,
TU Eindhoven, The Netherlands\and\url{https://apopov.eu/}}{%
alex@apopov.eu}{https://orcid.org/0000-0002-0158-1746}{Supported by the Dutch
Research Council (NWO) under the project number 612.001.801.}
\author{Sampson Wong}{Department of Computer Science, University of Copenhagen,
Denmark\and\url{https://sites.google.com/view/sampsonwong/}}{%
sawo@di.ku.dk}{https://orcid.org/0000-0003-3803-3804}{Supported in part by
Starting Grant 1054-00032B from the Independent Research Fund Denmark under the
Sapere Aude research career programme.}
\authorrunning{K. Buchin, M. Buchin, J. Gudmundsson, A. Popov, and S. Wong}
\keywords{Map Matching, Fréchet Distance, Data Structures}
\newcommand*{\fr}{d_\mathrm{F}}
\newcommand*{\R}{\mathbb{R}}
\newcommand*{\bO}{\mathcal{O}}
\newcommand*{\eps}{\varepsilon}
\newcommand*{\opt}{\mathrm{opt}}
\newcommand*{\trough}{\mathrm{trough}}
\newcommand*{\dsh}{---}
\DeclareMathOperator*{\argmin}{arg\,min}
\DeclareMathOperator*{\argmax}{arg\,max}
\newtheorem{problem}[theorem]{Problem}
\begin{document}
\maketitle

\begin{abstract}
Map matching is a common task when analysing GPS tracks, such as vehicle
trajectories.
The goal is to match a recorded noisy polygonal curve to a path on the map,
usually represented as a geometric graph.
The Fréchet distance is a commonly used metric for curves, making it a natural
fit.
The map-matching problem is well-studied, yet until recently no-one tackled the
data structure question: preprocess a given graph so that one can query the
minimum Fréchet distance between all graph paths and a polygonal curve.
Recently, Gudmundsson, Seybold, and Wong~\cite{prev} studied this problem for
arbitrary query polygonal curves and \(c\)-packed graphs.
In this paper, we instead require the graphs to be \(\lambda\)-low-density
\(t\)-spanners, which is significantly more representative of real-world
networks.
We also show how to report a path that minimises the distance efficiently
rather than only returning the minimal distance, which was stated as an open
problem in their paper.
\end{abstract}

\section{Introduction}\label{sec:intro}
Location data is ubiquitous, and analysis of that data is a common task.
GPS trajectories of vehicles or people often suffer from being noisy or
skipping large portions of the movement.
To analyse them more precisely, one may use \emph{map matching.}
The idea is that the vehicles move on a road network, so one could snap their
trajectories to a road network in a way that most closely resembles the
original.
We assume that the trajectory is a polygonal curve and the map is a geometric
graph in the plane.

The map-matching problem has received considerable attention, with multiple
surveys comparing the various approaches on different types of
data~\cite{refs_acm,refs_survey1,refs_survey2,refs_survey4,refs_survey5,refs_survey6}.
The \emph{Fréchet distance} is a natural measure of similarity for polygonal
curves~\cite{alt_godau,godau}, taking into account the ordering of the points
of the polygonal curves and capturing the maximal distance along them.
There is a host of work considering map matching specifically under the Fréchet
distance~\cite{frechet_graph,refs_frechet1,refs_frechet2,refs_frechet3,refs_frechet5,refs_frechet4,refs_frechet6},
including a seminal paper by Alt, Efrat, Rote, and Wenk~\cite{frechet_graph}.
Their algorithm requires \(\bO(mn\log mn \log n)\) time and \(\bO(mn)\) space
to match a polygonal curve of length~\(m\) to a planar graph \(G = (V, E)\)
with complexity \(\lvert V\rvert + \lvert E\rvert = n\).
As shown via a conditional lower bound by Gudmundsson, Seybold, and
Wong~\cite{prev}, this query time is close to optimal for planar graphs: there
is no algorithm that runs in \(\bO((mn)^{1 - \delta})\) time for any
\(\delta > 0\) that solves this problem after polynomial-time preprocessing of
the graph.
However, real-world road networks are rarely planar due to the presence of
bridges and tunnels, so we would like to find other assumptions on the graph
that also allow for faster query times.

Chen, Driemel, Guibas, Nguyen, and Wenk~\cite{refs_frechet3} study the
map-matching problem under \emph{realistic input} assumptions, which aim to
exclude particular types of degenerate instances to provide stronger results.
In their work in particular, the graph has low density and the trajectory is
\(c\)-packed.
A polygonal curve is called \emph{\(c\)-packed} if in any ball of radius~\(r\),
the total length of the curve inside the ball is at most~\(cr\).
We can use a similar definition for geometric graphs by measuring the total
length of the edges inside the ball.
Unfortunately, \(c\)-packedness is a strong assumption that is difficult to
satisfy.
We instead define low-density graphs, which are more representative of
real-world networks.

\begin{definition}\label{def:density}
A geometric graph \(P = (V, E)\) is \emph{\(\lambda\)-low
density}~\cite{ld_search,ld_thesis} if for every disk of radius~\(r > 0\) in
the plane, there are at most \(\lambda\) edges of length at least~\(2r\)
intersecting the disk.
\end{definition}

Our work has no assumptions on the trajectories at the expense of stricter
assumptions on the maps.
Most often one will have a large number of trajectories being mapped to a
relatively complex network, so to avoid the steep dependency on network
complexity when matching every trajectory, we consider the query version of the
problem.
We preprocess the map so that we can quickly answer many map-matching queries,
where each query is a trajectory.
To our knowledge, this problem has only been studied on \(c\)-packed
graphs~\cite{prev,refs_frechet4}.
Gudmundsson and Smid~\cite{refs_frechet4} show an approach for \(c\)-packed
trees with long edges and query trajectories with long edges.
Gudmundsson et al.~\cite{prev} assume that the graph is \(c\)-packed, but do
not impose any restrictions on the query trajectories.
However, \(c\)-packedness is not a realistic assumption for graphs representing
road networks.
Consider the example map of \cref{fig:map}: it is not \(c\)-packed for any
constant~\(c\), as that would require the total length of the roads be at
most~\(cr\) in all disks of radius~\(r\), and it is instead often much closer
to~\(cr^2\).
On some scale, this problem arises with many road networks, including city
streets or motorways.
Therefore, we would like to devise an approach with more realistic assumptions
on the graph.

\begin{figure}[tb]
\centering
\includegraphics{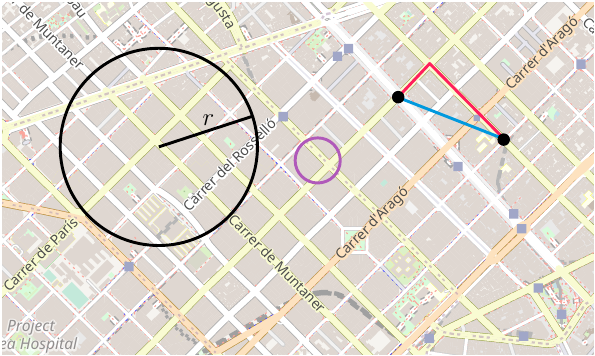}
\caption{An example road network, in Barcelona.
The total road length in a disk of radius~\(r\) is closer to \(cr^2\)
than~\(cr\), so this road network is not \(c\)-packed.
However, the number of long edges intersecting the purple disk is small, and
the red path is not much longer than the blue path, so the network is
\(\lambda\)-low density and a \(t\)-spanner for small \(\lambda\)~and~\(t\).
Map data from OpenStreetMap~\cite{openstreetmap}.}
\label{fig:map}
\end{figure}

We instead assume that our graph has low density, defined above.
As observed by Chen et al.~\cite{refs_frechet3}, the value of density does not
grow with the considered area on the map, and road networks typically have low
density; it is also a strictly weaker assumption than packedness.
We further assume that the graph is a \emph{\(t\)-spanner.}
A geometric graph is a \(t\)-spanner if for any two vertices, the length of the
shortest path between them in the graph is at most \(t\) times larger than the
Euclidean distance between them.
Road networks, in particular in urban areas, are typically good
spanners~\cite{feedlinks2011,Nordbeck64}.
For the example of \cref{fig:map}, it is clear that the road network is
a \(t\)-spanner and \(\lambda\)-low density for low \(t\)~and~\(\lambda\).
Compared to previous work, our assumptions make the approach significantly more
applicable on real-life road networks.

In this paper, we solve the \emph{map-matching query problem} under realistic
assumptions:
\begin{problem}\label{prob:mmq}
Given a geometric graph \(P\), construct a data structure that can answer the
following queries: for a polygonal curve~\(Q\) in~\(\R^2\),
\begin{enumerate}
\item compute \(\min_\pi \fr(\pi, Q)\) and
\item report \(\argmin_\pi \fr(\pi, Q)\),
\end{enumerate}
where \(\pi\) ranges over all paths between two vertices in~\(P\) and \(\fr\)
denotes the Fréchet distance.
\end{problem}
\pagebreak

We present a \((1 + \eps)\)-approximation under the assumptions listed above.
Our approach differentiates from previous work by Gudmundsson et
al.~\cite{prev} in two key aspects:
\begin{itemize}
\item we require the graph~\(P\) to be \(\lambda\)-low density and a
\(t\)-spanner, rather than \(c\)-packed, which is a more realistic assumption
for a road network~\cite{feedlinks2011,refs_frechet3,Nordbeck64};
\item we solve the problem of reporting the path that minimises the Fréchet
distance, which was stated as an open problem in their paper.
\end{itemize}

In order to achieve these results, we have to use different techniques, albeit
at the cost of a \(\sqrt{n}\) factor replacing a polylogarithmic factor in the
running time and space.
Where the paper by Gudmundsson et al.~\cite{prev} uses a semi-separated pair
decomposition, we construct a hierarchy of small balanced separators and store
appropriate associated data to guide the search for the optimal Fréchet
distance.
A \emph{balanced separator} of a graph \(P = (V, E)\) is a set of vertices
\(S \subseteq V\) that splits \(P\) into connected components of size at most
\(c\cdot\lvert V\rvert\) for constant~\(c\).
Combining the changes in analysis and the capability to report a path, we get
the following result.
Here \(n = \lvert V\rvert + \lvert E\rvert\) for a geometric graph
\(P = (V, E)\), and \(m\) is the number of vertices on the query polygonal
curve.

\begin{restatable}{theorem}{thmmain}\label{thm:main}
Suppose we are given a \(\lambda\)-low-density \(t\)-spanner of
complexity~\(n\) and a fixed \(0 < \eps < 1\).
Let \(\chi = \sfrac{1}{\eps^2}\log\sfrac{1}{\eps}\) and let
\(\varphi = (\sfrac{\lambda}{\eps^3} + \sfrac{t^2}{\eps^2})^2\).
In expected time \(\bO(\lambda\chi^2 n^{\sfrac{5}{2}} \log n)\) and using
\(\bO(\lambda \chi^2 n^{\sfrac{3}{2}})\) space, we can construct a data
structure \((1 + \eps)\)-approximating \cref{prob:mmq} that
performs distance queries in time
\(\bO(\varphi\cdot\sfrac{\lambda}{\eps} \cdot m\sqrt{n}\log mn \cdot
(\log^2 n + \varphi\log n + \varphi\cdot\sfrac{\lambda}{\eps}))\), and
answers the reporting queries for a path of length~\(\ell\) in
\(\bO(\sfrac{\ell}{\eps})\) additional time.
\end{restatable}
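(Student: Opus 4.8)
The plan is to combine a recursive decomposition of $P$ by small balanced separators — available because a $\lambda$-low-density graph is known to admit balanced separators of size $\bO(\sqrt{\lambda n})$ — with an approximate free-space reachability computation whose propagation is guided by this decomposition, while the $(1+\eps)$ slack and the $t$-spanner property are used to collapse the optimisation to a handful of decision problems at geometrically spaced scales.

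\textbf{Preprocessing.} First I would build a separator tree $\mathcal T$ of depth $\bO(\log n)$ by recursively computing a balanced separator (of size $\bO(\sqrt{\lambda n'})$ in a subgraph of complexity $n'$), so that the separators on any root-to-leaf path sum to $\bO(\sqrt{\lambda n})$ and their total over $\mathcal T$ is $\bO(\lambda n)$; the randomised separator algorithm is what turns the running time into an expectation. At each node of $\mathcal T$ I store, for each of $\bO(\chi)$ geometrically spaced candidate scales $\delta$, a gadget consisting of the subgraph simplified at resolution $\Theta(\eps\delta)$ together with, for every separator vertex, the data needed to glue partial Fréchet matchings that pass through it; the span of relevant scales is bounded using the graph diameter and the spanner property, contributing the $\log n$ and (together with the per-scale resolution of the gadget) the $\chi^2$ factors. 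Building each gadget amounts to an all-pairs-style Fréchet/shortest-path computation inside the subgraph, which dominates and yields the $\bO(\lambda\chi^2 n^{\sfrac{5}{2}}\log n)$ expected time, and summing the stored gadgets over the $\bO(\sqrt n)$ levels a vertex participates in gives the $\bO(\lambda\chi^2 n^{\sfrac{3}{2}})$ space.

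\textbf{Distance query.} Given $Q$ and a scale $\delta$, I would decide whether there is a path $\pi$ with $\fr(\pi,Q)\le\delta$ by the usual reachability propagation through the free space, but maintained in compressed form: the reachable set is represented at each node of $\mathcal T$ by its restriction to that node's separator vertices plus a flag summarising the interior, and is advanced segment by segment along $Q$. Two facts make this cheap: after simplifying $P$ at resolution $\Theta(\eps\delta)$, low density together with the spanner property force only $\bO(\sqrt\varphi)=\bO(\sfrac{\lambda}{\eps^3}+\sfrac{t^2}{\eps^2})$ simplified vertices and edges to lie within distance $\delta$ of a single segment of $Q$, so each segment meets only $\bO(\varphi)$ free-space cells; and pushing the compressed reachable set across a separator costs $\bO(\sqrt n)$ per relevant level while amortising to $\bO(\sqrt n)$ per segment of $Q$ along the descent in $\mathcal T$. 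The per-segment cost is then $\bO(\varphi\cdot\sfrac{\lambda}{\eps}\cdot\sqrt n\log mn)$ plus the overhead of operating on the compressed structure, namely the trailing $\log^2 n + \varphi\log n + \varphi\cdot\sfrac{\lambda}{\eps}$ factor; multiplying by $m$ segments and running the decision at a constant number of scales near the stored grid — which suffices for a $(1+\eps)$-approximation — gives the claimed query time. For the reporting version, once the optimal scale and final reachable set are known I retrace a monotone path backwards through the stored gadgets, unfolding each simplified edge into the original edges at cost $\bO(\sfrac{1}{\eps})$ per reported edge, for $\bO(\sfrac{\ell}{\eps})$ extra time on a path of length $\ell$.

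\textbf{Main obstacle.} The hardest part is the correctness and tightness of the distance query: proving that the compressed, separator-indexed representation of the reachable set never drops a feasible path and never admits a spurious one while respecting monotonicity in $Q$, and, in tandem, that simplifying a low-density $t$-spanner at resolution $\Theta(\eps\delta)$ perturbs the Fréchet distance by at most a $(1+\eps)$ factor yet keeps the per-segment local complexity at $\bO(\varphi)$. The spanner property is indispensable here — without it a dense cloud of short edges near $Q$ could blow up the local complexity arbitrarily — and carefully shortcutting such clouds while keeping a valid Fréchet matching is the delicate step, as is verifying that the gluing data stored at the separator vertices is enough to stitch the local matchings into a global one.
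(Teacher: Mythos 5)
Your proposal shares the correct building blocks with the paper (a balanced-separator hierarchy from the lanky/low-density property, low density to bound local candidate complexity, the spanner property to cluster candidate endpoints, and geometric scales for the $(1+\eps)$ slack), but it diverges on the query mechanism in a way that leaves a genuine gap. The paper answers the optimisation query by reducing to a \emph{decision} procedure — query a clustering/trough low-density range-search structure (\cref{lem:ld_points}, \cref{lem:ld_ds}) to get $\bO(\varphi)$ candidate matching points per query vertex, build a small directed graph over them with edge weights from exponential-grid lookups (\cref{lem:grid_all}), and test reachability — and then lifts it to the optimisation version via \emph{parametric search} with $N_P=m\sqrt n$ processors; the reported bound is exactly $\bO(N_P T_P + T_P T_S\log N_P)$, which is where the extra $\log mn$ and $\log n$ factors come from. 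You instead propose running the decision "at a constant number of scales near the stored grid." That does not yield a $(1+\eps)$-approximation: the optimal distance can lie anywhere in a range of unbounded spread, and the exponential grids stored in the data structure resolve spread only \emph{per transit pair for a fixed endpoint pair}; they do not hand you a global candidate scale for the whole curve $Q$. Without parametric search (or an explicit, complete enumeration of candidate critical values), the optimisation step is not justified, and the claimed bound is being fitted rather than derived.

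Two secondary issues. First, your compressed free-space reachability — "the reachable set restricted to the separator vertices plus a flag summarising the interior," advanced segment by segment — is not the paper's construction and is not worked out to the point where correctness is plausible: propagating reachability in a free-space diagram over a graph requires tracking reachable parameter intervals on edges, and collapsing this to a boolean per separator vertex plus one interior flag can both drop feasible monotone paths and admit spurious ones. The paper avoids this entirely by discretising with the trough-based range search, which you never invoke, and by the exponential-grid precomputation rather than your "simplify the subgraph at resolution $\Theta(\eps\delta)$" idea (graph simplification changes connectivity, so preserving the Fréchet distance to \emph{paths} within $(1+\eps)$ would need its own argument). Second, your space accounting reasons about "$\bO(\sqrt n)$ levels a vertex participates in," but the separator tree has depth $\bO(\log n)$; the $\sqrt n$ in the space bound comes from the $\bO(\tau k\sqrt k)$ transit pairs stored at a node of size $k$, governed by the recurrence $T(k)=T(\sfrac{2k}{3})+T(\sfrac{k}{3})+\bO(\tau k\sqrt k)$, so while your final numbers coincide with the paper's, the derivation is off.
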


In a typical setting, \(\lambda\)~and~\(t\) are small constants.
We need
\(\bO(n^{\sfrac{3}{2}} \cdot \sfrac{1}{\eps^4} \cdot \log^2\sfrac{1}{\eps})\)
space, resolving to \(\bO(n^{\sfrac{3}{2}})\) for fixed~\(\eps\), and we
support distance queries in time \(\bO(\eps^{-7} m\sqrt{n}\log mn \cdot
(\log^2 n + \eps^{-6} \log n + \eps^{-7}))\), or
\(\bO(m\sqrt{n}\log mn \cdot \log^2 n)\) for fixed~\(\eps\).

The rest of the paper is organised as follows.
After covering some preliminaries in \cref{sec:prelims}, we first tackle the
simpler problem of finding a path in the graph that most closely follows a line
segment between two vertices of the graph in terms of the Fréchet distance in
\cref{sec:sp}.
In that setting, we find a \(3\)-approximation.
The data structure we develop there is used later for obtaining a search window
for when the end points of the path are not given.
In \cref{sec:segm}, we generalise this to an arbitrary query line segment that
does not have to start or end at a graph vertex, and show how to achieve a
\((1 + \eps)\)-approximation.
We also describe how to report a path that corresponds to a
\((1 + \eps)\)-approximation.
Finally, in \cref{sec:mmq}, we show how to combine the segment queries in order
to handle a complete polygonal curve.

\section{Preliminaries}\label{sec:prelims}
In this paper, we work with geometric graphs, that is, graphs embedded in the
plane with straight-line edges.
For a graph~\(P = (V, E)\), we denote its complexity
with~\(n = \lvert V\rvert + \lvert E\rvert\).

We denote the fact that a path \(\pi\) goes from \(u \in V\) to \(v \in V\)
by \(\pi: u \leadsto v\); and \(\pi \circ \sigma\) denotes the composition
(concatenation) of paths~\(\pi\),~\(\sigma\).
Denote the Euclidean distance between points \(x, y \in \R^2\) with
\(\lVert x - y\rVert\).
We can consider the edges weighted by defining the weight of an edge
\(e \in E\) as
\(\lvert e\rvert = \lVert u - v\rVert\) for \(e\) connecting \(u, v \in V\).
We define the \emph{graph distance \(d_P\)} as the shortest path distance along
the graph between any two vertices \(u, v \in V\):
\(d_P(u, v) = \sum_{e \in \pi} \lvert e\rvert\), where \(\pi: u \leadsto v\) is
a shortest path in \(P\) between \(u\) and \(v\).

We assume our input graph has low density in this paper, as defined previously.
We need to define two more graph properties that we use in our construction.
\begin{definition}\label{def:lanky}
A graph \(P = (V, E)\) is \emph{\(\tau\)-lanky}~\cite{lanky} if for every disk
of radius \(r > 0\) centred at any vertex \(v \in V\), there are at most
\(\tau\) edges of length at least \(r\) that are cut by the disk.
\end{definition}
Here an edge is \emph{cut} by a disk if exactly one of its endpoints is inside
the disk.
It is easy to see that a \(\tau\)-lanky graph has bounded degree of at most
\(\tau\); and that any \(\lambda\)-low-density graph is also \(\lambda\)-lanky.
Let us also formally define a \(t\)-spanner.
\begin{definition}\label{def:spanner}
A graph \(P = (V, E)\) is called a \emph{\(t\)-spanner} if for any two vertices
\(u, v \in V\), we have \(d_P(u, v) \leq t \cdot \lVert u - v\rVert\).
\end{definition}

A query is a polygonal curve in the plane, that is, a sequence of points in
\(\R^2\) connected with line segments.
For a query curve \(Q\), let \(m\) be the number of points in the sequence.

\section{Straight Path Queries}\label{sec:sp}
In this \lcnamecref{sec:sp}, we present a \(3\)-approximation to the following
problem, so that for the value \(r\) that we return, we have
\(\min_\pi \fr(\pi, uv) \leq r \leq 3 \cdot \min_\pi \fr(\pi, uv)\).

\begin{problem}\label{prob:sp}
Given a geometric graph \(P = (V, E)\), construct a data structure that can
answer the following queries: for a pair of vertices \(u, v \in V\), compute
\(\min_\pi \fr(\pi, uv)\), where \(\pi: u \leadsto v\) is a path in \(P\).
\end{problem}

Let \(n = \lvert V\rvert + \lvert E\rvert\).
To solve \cref{prob:sp} efficiently, we impose an additional constraint
on~\(P\)\dsh we require in this \lcnamecref{sec:sp} that \(P\) has a graph
property satisfying two criteria:
\begin{enumerate}
\item the property is decreasing monotone, so it holds on all induced
subgraphs;
\item and any graph with the property admits a small separator.
\end{enumerate}

An example of such a property is planarity: any subgraph of a planar graph is
planar, and the existence of small separators in planar graphs is a classical
result~\cite{planar_sep}.
However, not all road networks are planar, as most road networks include
bridges and tunnels.
Instead, we require that \(P\) is \(\tau\)-lanky~\cite{lanky}.
It is trivial to show that any subgraph of a \(\tau\)-lanky graph is also
\(\tau\)-lanky; and Le and Than~\cite{lanky} show that a \(\tau\)-lanky graph
of complexity \(n\) admits a balanced separator of size \(\bO(\tau\sqrt{n})\)
that can be found in \(\bO(\tau n)\) expected time.

In \cref{sec:segm}, we show how to generalise the query to arbitrary segment
endpoints and how to improve the result to a \((1 + \eps)\)-approximation for
any fixed \(\eps > 0\).

\subparagraph*{Intuition.}
When constructing the data structure, we can use the algorithm by Alt et
al.~\cite{frechet_graph} in order to compute the Fréchet distance between a
line segment and a path in the graph.
At query time, running that algorithm would be prohibitively slow.
We also want to achieve subquadratic storage, so we cannot precompute the
distances for all pairs of vertices.

Broadly, the idea is to find sufficient structure in the graph to be able to
find a small set of vertices so that any path in the graph passes through at
least one of these vertices; we call them \emph{transit} vertices.
Then we can precompute the distances between the optimal path and the line
segment when going from any vertex of the graph to one of the transit vertices.
At query time, we then only need to find an optimal transit vertex.
Since we are composing two paths, the computed distance is only a
\(3\)-approximation.

More specifically, a balanced separator in a graph forms a set of transit
vertices.
We can compute them hierarchically and store the separators and the precomputed
distances in a binary tree.
With some organisation, at query time, we can efficiently find all the relevant
transit vertices\dsh the ones that may separate the two query vertices on a
path.

\subparagraph*{Data structure.}
We construct a hierarchy of separators on the graph and store it with some
extra information in a binary tree.
Each node in the tree represents both an induced subgraph of \(P\), and a
separator of that induced subgraph.
Consider the node \(i\) corresponding to some induced subgraph
\(P_i = (V_i, E_i)\) of \(P\).
Conceptually, the node represents the balanced separator \(S_i\), so the subset
of vertices of \(P_i\), splitting it into two subgraphs \(A_i\) and \(B_i\). 

The root stores \(S_1\) and the extra information for all pairs from
\(V \times S_1\), so the top-level balanced separator for the entire graph.
The two children of each node correspond to the subgraphs \(A_i\) and \(B_i\).
The recursion ends when the subgraphs in the leaves have constant size.
In a leaf \(i\), assign \(S_i = V_i\), so compute the distances for all pairs
of vertices.

For every pair of vertices \((u, s) \in V_i \times S_i\), we store
\(\min_\pi \fr(\pi, us)\), where \(\pi: u \leadsto s\) in \(P\).
(Note that a path \(\pi\) may leave \(P_i\).)
We call all vertices in \(S_i\) \emph{transit} vertices; and all pairs
\((u, s)\) for which we store the distances are called \emph{transit pairs.}

In addition, for each vertex \(u \in V\), we store a pointer to the tree node
\(i\) so that \(u \in S_i\).
There is exactly one such node in the tree for every vertex: if a vertex is
part of a separator, it will not be in an induced subgraph further down in the
recursion, and if it is never chosen to be in a separator, then it belongs to a
leaf, which is treated as \(S_i\) in its entirety.

\subparagraph*{Construction.}
We construct the hierarchy top--down, computing the separators on the induced
subgraphs at every level using the result of Le and Than~\cite{lanky}.
For each transit pair \((p, s)\) in a node, we compute the appropriate Fréchet
distance in the entire graph using the algorithm by Alt et
al.~\cite{frechet_graph}, extended by Gudmundsson et
al.~\cite[Lemma~4.3]{prev}.
As we construct the separators, we also store in a table the pointer for each
vertex to the correct node.

\subparagraph*{Distance query.}
Suppose the query is to find the minimal Fréchet distance between the segment
\(uv\) and any path between \(u\) and \(v\).
Initialise \(\opt = \infty\).
First, we use the table to find the pointers to the two nodes in the tree \(i\)
and \(j\) so that \(u \in S_i\) and \(v \in S_j\).
Then we find their lowest common ancestor, call it node \(a\).
For every node \(a'\) on the path from \(a\) to the root of the tree, perform
the following procedure, updating~\(\opt\).
At the end, return~\(\opt\).

Denote \(D_{xy} = \min_\pi \fr(\pi, xy)\) over all \(\pi: x \leadsto y\).
For the query \(uv\), denote \(D'_x = \min_{r \in uv} \lVert r - x\rVert\), so
the shortest distance between \(x\) and any point on \(uv\).
For all \(s \in S_{a'}\), fetch the stored \(D_{us}\) and \(D_{sv}\) and
compute \(D'_s\).
Then compute \(D = \max(D_{us}, D_{sv}) + D'_s\) and finally assign
\(\opt = \min(\opt, D)\).

\subparagraph*{Running time analysis.}
For the distance queries, we take \(\bO(1)\) time to follow the pointers;
\(\bO(\log n)\) time to find the lowest common ancestor; and then \(\bO(1)\)
time to check the distance per transit vertex.
As we check all transit vertices on the path from the lowest common ancestor to
the root, we can write down the worst-case recurrence as
\[T(k) = T(\sfrac{2k}{3}) + \bO(\tau\sqrt{k})\]
for a graph on \(k\) vertices, since the balanced separator we use subdivides
the graph into two subgraphs on at most \(\sfrac{2k}{3}\) vertices.
For the entire graph, this resolves to \(\bO(\tau\sqrt{n})\).
This term dominates the query time.

For the construction, we need \(\bO(\tau k)\) expected time to find a separator
in a graph of size~\(k\).
In each node, for each of \(\bO(\tau k\sqrt{k})\) transit pairs, we compute the
distance in \(\bO(n \log n)\) time.
Assuming we build the tree until the leaves are of constant size, we get the
recurrence
\begin{align*}
T(k) &= T(\sfrac{2k}{3}) + T(\sfrac{k}{3}) +
\bO(\tau k + \tau k\sqrt{k}\cdot n\log n)\\
&= T(\sfrac{2k}{3}) + T(\sfrac{k}{3}) + \bO(\tau k\sqrt{k}\cdot n\log n)\,,
\end{align*}
which resolves to \(\bO(\tau n^{\sfrac{5}{2}} \log n)\) expected time overall.

\subparagraph*{Space.}
We store a table of pointers of size \(\bO(n)\) and the main data structure.
For a graph on \(k\) vertices, we store constant-size data for each transit
pair; and there are \(\bO(\tau k\sqrt{k})\) transit pairs.
Overall, the space used resolves to \(\bO(\tau n\sqrt{n})\), as follows from
the recurrence
\[T(k) = T(\sfrac{2k}{3}) + T(\sfrac{k}{3}) + \bO(\tau k\sqrt{k})\,.\]

\subparagraph*{Correctness.}
It remains to show that the described query procedure gives us an appropriate
distance.
First, assume that we do consider an optimal transit vertex; we show that we
indeed compute a \(3\)-approximation.
The following proof is essentially given by Gudmundsson et
al.~\cite[Theorem~4.1]{prev}, and relies on a semi-separated pair
decomposition~\cite[Lemma~5.5]{approx} rather than separators.
We include the proof for the sake of completeness and ease of reading.

\begin{lemma}
Suppose that \(\opt = \min_{\pi'} \fr(\pi', uv)\) for query \(uv\) and
\(\pi': u \leadsto v\), and that \(\pi = \argmin_{\pi'} \fr(\pi', uv)\) passes
through a transit vertex \(s\), so \(\pi: u \leadsto s \leadsto v\).
Let \(s'\) be the transit vertex that minimises
\(D = \max(D_{us'}, D_{s'v}) + D'_{s'}\).
If \(s\) is considered when finding \(D\), then
\(\opt \leq D \leq 3\cdot\opt\).
\end{lemma}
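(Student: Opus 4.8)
The plan is to establish the two inequalities separately. Both follow from three elementary facts about the Fréchet distance: the triangle inequality \(\fr(\alpha, \gamma) \le \fr(\alpha, \beta) + \fr(\beta, \gamma)\); the fact that matchings concatenate, so \(\fr(\alpha \circ \beta, \gamma \circ \delta) \le \max(\fr(\alpha, \gamma), \fr(\beta, \delta))\); and the bound \(\fr(ux, uy) \le \lVert x - y \rVert\) for two segments sharing the endpoint~\(u\) (match them by the common linear parameter).

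For the lower bound \(\opt \le D\), I would unfold the definition of \(D\). Let \(s'\) be the transit vertex attaining \(D\), let \(\pi_1 : u \leadsto s'\) and \(\pi_2 : s' \leadsto v\) be paths in \(P\) realising \(D_{us'}\) and \(D_{s'v}\), and let \(p \in uv\) be the point with \(\lVert p - s' \rVert = D'_{s'}\). Writing \(uv = up \circ pv\), we get \(\fr(\pi_1, up) \le \fr(\pi_1, us') + \fr(us', up) \le D_{us'} + D'_{s'}\) and, symmetrically, \(\fr(\pi_2, pv) \le D_{s'v} + D'_{s'}\). Concatenating, \(\pi_1 \circ \pi_2\) is a \(u \leadsto v\) path with \(\fr(\pi_1 \circ \pi_2, uv) \le \max(D_{us'}, D_{s'v}) + D'_{s'} = D\), so \(\opt \le D\).

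For the upper bound \(D \le 3\opt\), note first that since \(s\) is among the transit vertices inspected when computing \(D\) and \(s'\) is the minimiser, \(D \le \max(D_{us}, D_{sv}) + D'_s\), so it suffices to bound this quantity. Fix an optimal matching between \(\pi\) and \(uv\) (which exists since both are polygonal); it sends the split vertex~\(s\), where \(\pi\) passes from its \(u \leadsto s\) portion~\(\pi_1\) to its \(s \leadsto v\) portion~\(\pi_2\), to a point \(q \in uv\) with \(\lVert s - q \rVert \le \opt\), and hence \(D'_s \le \opt\). Restricting that matching gives \(\fr(\pi_1, uq) \le \opt\), so \(D_{us} \le \fr(\pi_1, us) \le \fr(\pi_1, uq) + \fr(uq, us) \le \opt + \lVert q - s \rVert \le 2\opt\), and symmetrically \(D_{sv} \le 2\opt\). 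Combining, \(D \le 2\opt + \opt = 3\opt\), which together with the lower bound yields \(\opt \le D \le 3\opt\).

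The two triangle-inequality chains are routine; the only step needing a word of justification is that an optimal matching assigns the split vertex~\(s\) a well-defined image~\(q\) on the segment~\(uv\). For polygonal curves the Fréchet distance is realised by some matching, so this is immediate; alternatively one runs the whole argument with \(\opt + \delta\) in place of \(\opt\) for arbitrary \(\delta > 0\) and takes \(\delta \to 0\). I expect this small piece of bookkeeping at the split point to be the only subtlety in the argument.
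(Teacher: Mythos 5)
Your proof is correct and follows essentially the same route as the paper's: for the lower bound you concatenate optimal paths to a new \(u\leadsto v\) path and bound its Fréchet distance to \(uv\) via the max-of-pieces rule together with \(\fr(us', up) \le \lVert s' - p\rVert\); for the upper bound you use \(D \le \max(D_{us}, D_{sv}) + D'_s\) and then bound each summand by restricting the optimal \(\pi\)-to-\(uv\) matching at the split vertex \(s\), exactly as the paper does with its point \(r\) aligned to \(s\). The only cosmetic difference is that you apply the triangle inequality before the concatenation bound in the first half, whereas the paper does it in the opposite order; the content is the same, and your remark on the existence of the realising matching is a harmless extra precaution.
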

\begin{proof}
Let \(t\) be the point on \(uv\) closest to \(s'\).
Let \(\pi_{us'} = \argmin_{\pi'} \fr(\pi', us')\) over \(\pi': u \leadsto s'\),
and define \(\pi_{s'v}\) similarly.
Note that the composition of these paths \(\pi_{us'} \circ \pi_{s'v}\) does not
have to be the same as \(\pi\).
Then
\begin{align*}
\opt = \fr(\pi, uv) &\leq \fr(\pi_{us'} \circ \pi_{s'v}, uv)
\leq\max\big(\fr(\pi_{us'}, ut), \fr(\pi_{s'v}, tv)\big)\\
&\leq\max\big(\fr(\pi_{us'}, us') + \lVert s' - t\rVert,
\fr(\pi_{s'v}, s'v) + \lVert s' - t\rVert\big)\\
&= \lVert s' - t\rVert + \max\big(\fr(\pi_{us'}, us'),
\fr(\pi_{s'v}, s'v)\big)\\
&= D'_{s'} + \max(D_{us'}, D_{s'v}) = D\,.
\end{align*}

On the other hand, note that \(D \leq \max(D_{us}, D_{sv}) + D'_s\), as \(s'\)
minimises that expression.
We also have \(D'_s \leq \fr(\pi, uv)\).
Let \(\pi(u, s)\) be the subpath of \(\pi\) from \(u\) to \(s\).
Let \(r\) be the point in \(uv\) that is aligned to \(s\) in the Fréchet
alignment between \(\pi\) and \(uv\).
Then
\begin{align*}
D_{us} = \fr(\pi_{us}, us) \leq \fr(\pi(u, s), us) &\leq \fr(\pi(u, s), ur) +
\fr(ur, us)\\
&= \fr(\pi(u, s), ur) + \lVert r - s\rVert\\
&\leq \fr(\pi, uv) + \fr(\pi, uv) = 2\fr(\pi, uv)\,.
\end{align*}
Using the same argument for \(D_{sv}\), we conclude
\[D \leq D'_s + \max(D_{us}, D_{sv}) \leq \fr(\pi, uv) + 2\fr(\pi, uv) =
3\fr(\pi, uv)\,,\]
and so \(\opt \leq D \leq 3\cdot\opt\) and the value is a \(3\)-approximation.
\end{proof}

\begin{figure}
\centering
\includegraphics{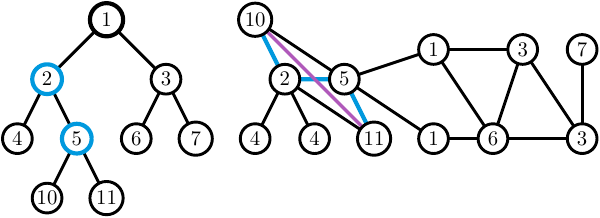}
\caption{A representation of the hierarchy (left) for the graph (right).
A query segment is shown in purple, a possible path in blue.
We check nodes \(5\), \(2\), and \(1\).
If we pick the transit vertex in \(5\), then the path may be
\(10 \to 2 \to 5\), so we may need to go up the tree to find the next transit
pair.}
\label{fig:graph}
\end{figure}

Now we show that we consider all the relevant transit vertices.
See \cref{fig:graph}.
\begin{lemma}
For the query \(uv\), the procedure considers a transit vertex \(s\) such that
\(s\) lies on an optimal path \(\pi = \argmin_{\pi'} \fr(\pi', uv)\), where
\(\pi': u \leadsto v\).
\end{lemma}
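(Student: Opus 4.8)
The plan is to show that the set of transit vertices examined by the query procedure — namely the separators stored in the nodes on the tree path from the lowest common ancestor $a$ of $i$ and $j$ up to the root — necessarily contains a vertex of any optimal path $\pi: u \leadsto v$. First I would recall how the tree encodes the graph: each node $a'$ corresponds to an induced subgraph $P_{a'} = (V_{a'}, E_{a'})$, with $S_{a'}$ its balanced separator, and the two children corresponding to the subgraphs $A_{a'}$ and $B_{a'}$ obtained by deleting $S_{a'}$. A key structural fact is that the vertex sets of the leaves partition $V$, and more generally for any node $a'$, the sets $S_{a'}$ together with the leaves below $a'$ partition $V_{a'}$; moreover, every vertex $w \in V$ belongs to exactly one separator $S_{\nu(w)}$ (this is exactly the node the pointer table points to), and $w \in V_{a'}$ if and only if $\nu(w)$ is in the subtree rooted at $a'$, i.e.\ $a'$ is an ancestor of (or equal to) $\nu(w)$.

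Next I would argue about where $u$ and $v$ sit relative to the root. Let $i = \nu(u)$ and $j = \nu(v)$, and let $a$ be their lowest common ancestor. By the structural fact above, every ancestor of $a$ — in particular every node $a'$ on the path from $a$ to the root — has both $u$ and $v$ in its subgraph $P_{a'}$. Now take an optimal path $\pi: u \leadsto v$. I want to show that $\pi$ must pass through the separator $S_{a'}$ of some such $a'$. Descend the tree from the root: at the root $P_1 = P$ contains all of $\pi$. Inductively, suppose $a'$ is a node on the root-to-$a$ path with $\pi \subseteq P_{a'}$ (as a sequence of vertices and edges of $P_{a'}$); if $\pi$ avoids $S_{a'}$ entirely, then since deleting $S_{a'}$ disconnects $P_{a'}$ into $A_{a'}$ and $B_{a'}$ with no edges between them, and $\pi$ is connected, $\pi$ lies wholly within $A_{a'}$ or wholly within $B_{a'}$ — but then both $u$ and $v$ lie in that same child subgraph, so $\nu(u) = i$ and $\nu(v) = j$ are both in that child's subtree, contradicting that $a'$ (an ancestor of $a$, hence of the lowest common ancestor) was chosen minimally; so $\pi$ continues into exactly one child, which is again on the root-to-$a$ path. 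This descent cannot proceed past $a$ itself without $\pi$ meeting a separator: at $a$, the children's subtrees contain $i$ and $j$ separately (or $a \in \{i,j\}$), so $\pi$ cannot lie wholly in one child. Hence at some node $a'$ on the path from $a$ to the root, $\pi$ meets $S_{a'}$ at a vertex $s$, and since the procedure iterates over exactly these nodes and all their separator vertices, $s$ is considered.

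The main obstacle I anticipate is handling the corner cases cleanly: when $u$ or $v$ is itself a separator vertex high in the tree (so $i$ or $j$ is an ancestor of the other, and $a$ coincides with one of them), and when $a$ is the root. In the first case one should check that the path from $a$ to the root still includes the node whose separator contains the relevant endpoint — but that is automatic since $a \in \{i, j\}$ means the separator containing that endpoint is $S_a$ itself, which trivially contains a vertex of $\pi$ (an endpoint). In the second case the descent argument still works because at the root $\pi$ cannot avoid $S_1$ without both endpoints landing in the same child, which would contradict $a$ being the root. I would also make explicit the (easy) invariant that $\pi$, viewed as a subgraph, really does stay inside $P_{a'}$ as we descend — this follows because once $\pi \subseteq P_{a'}$ and $\pi$ avoids $S_{a'}$, it lies in one child subgraph $A_{a'}$ or $B_{a'}$, which is precisely $P$ restricted to that child node. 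Combining this lemma with the previous one shows the query returns a value in $[\opt, 3\opt]$, completing the correctness argument.
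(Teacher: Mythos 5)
Your proposal is correct and matches the paper's argument in substance. The paper proves the same fact via a two-case analysis at the lowest common ancestor $a$: either $u$ or $v$ is in $S_a$ (trivial), or $u, v$ are on opposite sides of $S_a$, and then either $\pi$ stays in $P_a$ (so must cross $S_a$) or $\pi$ leaves $P_a$ (so must pass through a separator stored at an ancestor of $a$). Your top-down descent with the invariant $\pi \subseteq P_{a'}$ is a cleaner way to package the same separator argument, and it handles the case of $\pi$ leaving $P_a$ automatically rather than as a separate branch. One small wording issue: in your inductive step, the clause claiming a \emph{contradiction} when both $i$ and $j$ land in one child of $a'$ is not quite right for $a'$ strictly above $a$ — in that case there is no contradiction and the descent simply continues into that child; the genuine obstruction arises only at $a' = a$ (or when $a' \in \{i,j\}$), which you do handle correctly in the following sentence. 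This does not affect the validity of the proof.
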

\begin{proof}
We consider two cases based on where the lowest common ancestor is found.
First, suppose that the lowest common ancestor \(a\) contains \(u\), \(v\), or
both \(u\) and \(v\) in the separator.
In other words, \(u \in S_a\) or \(v \in S_a\), and so \(s = u\) or \(s = v\).
Then \(\pi\) passes through \(s\), and we consider \(s\) as a transit vertex.

Now assume that the lowest common ancestor \(a\) does not contain \(u\) or
\(v\) in the separator; then \(u\) and \(v\) are separated by \(S_a\).
Without loss of generality, let \(u \in A_a\) and \(v \in B_a\).
If the path \(\pi\) stays within the subgraph \(P_a\), then it goes
through some \(s \in S_a\), which we consider.
Otherwise, it goes through some separator between \(P_a\) and the rest of the
graph; we check exactly all the vertices in these separators, as they fall on
the path from \(a\) to the root.
\end{proof}

Bringing the above considerations together, we get the main result of this
\lcnamecref{sec:sp}.
\begin{theorem}\label{thm:sp}
Given a \(\tau\)-lanky graph of complexity \(n\), we can construct a data
structure giving a \(3\)-approximation for \cref{prob:sp} in expected time
\(\bO(\tau n^{\sfrac{5}{2}} \log n)\),
using \(\bO(\tau n\sqrt{n})\) space, that supports distance queries in time
\(\bO(\tau\sqrt{n})\).
\end{theorem}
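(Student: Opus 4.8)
The plan is to assemble Theorem~\ref{thm:sp} directly from the components already developed in this section, treating the theorem as a bookkeeping exercise over the data structure description, the correctness lemmas, and the three running-time analyses.

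First I would invoke the construction paragraph: given a \(\tau\)-lanky graph \(P\) of complexity \(n\), we build the binary tree of hierarchical separators, using the result of Le and Than~\cite{lanky} that a \(\tau\)-lanky graph on \(k\) vertices admits a balanced separator of size \(\bO(\tau\sqrt{k})\) computable in \(\bO(\tau k)\) expected time. I would note explicitly that \(\tau\)-lankiness is hereditary on induced subgraphs, which is what licenses the recursion: every node's subgraph \(P_i\) is again \(\tau\)-lanky, so the separator bound applies at every level. The recursion terminates at leaves of constant size, where \(S_i = V_i\). For each transit pair we store the Fréchet distance computed in the whole graph via Alt et al.~\cite{frechet_graph} as extended in~\cite[Lemma~4.3]{prev}, and we store the vertex-to-node pointer table. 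This establishes that the data structure exists and is well-defined.

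Next I would establish correctness of the query procedure by chaining the two lemmas just proved: the second lemma guarantees that for any query \(uv\) the procedure considers some transit vertex \(s\) lying on an optimal path \(\pi = \argmin_{\pi'}\fr(\pi', uv)\), and the first lemma then guarantees that the value \(\opt\) returned — the minimum over all considered transit vertices \(s'\) of \(\max(D_{us'}, D_{s'v}) + D'_{s'}\) — satisfies \(\min_\pi \fr(\pi, uv) \le \opt \le 3\cdot\min_\pi\fr(\pi, uv)\). So the data structure is a valid \(3\)-approximation for \cref{prob:sp}.

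Finally I would collect the three resource bounds from the analysis paragraphs. The query time satisfies \(T(k) = T(\sfrac{2k}{3}) + \bO(\tau\sqrt{k})\), dominated by the root term, resolving to \(\bO(\tau\sqrt{n})\) — the pointer lookups and the lowest-common-ancestor search contribute only \(\bO(\log n)\), which is absorbed. The construction time satisfies \(T(k) = T(\sfrac{2k}{3}) + T(\sfrac{k}{3}) + \bO(\tau k\sqrt{k}\cdot n\log n)\); since the per-node work \(\bO(\tau k\sqrt{k}\cdot n\log n)\) is superlinear in \(k\), the recursion is top-heavy and resolves to \(\bO(\tau n^{\sfrac{5}{2}}\log n)\) expected time. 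The space satisfies \(T(k) = T(\sfrac{2k}{3}) + T(\sfrac{k}{3}) + \bO(\tau k\sqrt{k})\), again top-heavy, resolving to \(\bO(\tau n\sqrt{n})\), plus the \(\bO(n)\) pointer table which is subsumed. I do not expect a genuine obstacle here — the only mildly delicate point is confirming that the two recurrences with the \(\sfrac{2k}{3}\) and \(\sfrac{k}{3}\) split are dominated by their root terms (so the branching factor does not blow up the bound), which follows because the local cost is \(\Omega(k^{1+\delta})\) for \(\delta > 0\) and \((\sfrac{2}{3})^{1+\delta} + (\sfrac{1}{3})^{1+\delta} < 1\); everything else is direct citation of the preceding paragraphs and lemmas.
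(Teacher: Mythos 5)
Your proposal is correct and follows the paper's own route exactly: the theorem is obtained by combining the hierarchical-separator construction (licensed by the hereditariness of \(\tau\)-lankiness and Le--Than's separator bound), the two correctness lemmas, and the three recurrences from the analysis paragraphs. Your added observation that the two-branch recurrences are root-dominated because the per-node cost is \(\Omega(k^{1+\delta})\) with \((\sfrac{2}{3})^{1+\delta} + (\sfrac{1}{3})^{1+\delta} < 1\) is a correct justification of a step the paper leaves implicit.
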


\section{Map-Matching Segment Queries}\label{sec:segm}
In this \lcnamecref{sec:segm}, we generalise the construction we just presented
to compute a \((1 + \eps)\)-approximation and to handle reporting, as well as
to support arbitrary query line segments.

\begin{problem}\label{prob:segm}
Given a geometric graph \(P = (V, E)\), construct a data structure that can
answer the following queries: for a line segment \(pq\) in the plane,
\begin{enumerate}
\item compute \(\min_\pi \fr(\pi, pq)\) and
\item report \(\argmin_\pi \fr(\pi, pq)\),
\end{enumerate}
where \(\pi\) ranges over all paths between two vertices in \(P\).
\end{problem}

For distance queries, we closely follow the work of Gudmundsson et
al.~\cite{prev}, which in turn follows the approach of Driemel and
Har-Peled~\cite{approx}.
The latter appears at first to be devoted to a rather different problem, but
turns out to be very helpful in the map-matching setting.

\subparagraph*{Data structure for distance queries with fixed path endpoints.}
We can immediately use the approach of \cref{sec:sp} on arbitrary segments.
Suppose the query is a pair of vertices \(u, v\in V\) and a segment
\(pq \subset \R^2\).
Then we can find a \(3\)-approximation to \(\min_\pi \fr(\pi, pq)\), where
\(\pi: u\leadsto v\).
To achieve that, we just need to define \(D'_{s'} = \fr(us' \circ s'v, pq)\)
and let \(t\) be the point on \(pq\) aligned with \(s'\) under this matching.

We can directly use the following statement~\cite[Lemma~5.2]{prev}, which
closely mimics the approach of Driemel and Har-Peled~\cite[Lemma~5.8]{approx}:
\begin{lemma}\label{lem:grid}
Let \(u, v \in V\) be a fixed pair of vertices.
Let \(\eps > 0\) and \(\chi = \sfrac{1}{\eps^2}\log\sfrac{1}{\eps}\).
In \(\bO(\chi^2 n\log n)\) time and using \(\bO(\chi^2)\) space, one can
construct a data structure that, given a query segment \(pq\) in the plane,
returns in \(\bO(1)\) time a \((1 + \eps)\)-approximation to
\(\min_\pi \fr(\pi, pq)\), where \(\pi: u \leadsto v\).
\end{lemma}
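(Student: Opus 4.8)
The plan is to reduce the query for an arbitrary segment $pq$ to the machinery of \cref{sec:sp}, following Driemel and Har-Peled~\cite{approx} and Gudmundsson et al.~\cite{prev}. Fix the pair $u, v \in V$. The Fréchet distance $\min_\pi \fr(\pi, pq)$ over $\pi : u \leadsto v$ depends on $pq$ only through a bounded number of real parameters, once we know the optimal path $\pi$: intuitively it is governed by the positions of $p$ and $q$ relative to the vertices traversed by $\pi$. The key observation from the approximate setting is that we do not need to know $\pi$ exactly: a $(1+\eps)$-approximation to the true distance can be read off from a coarse classification of where $p$ and $q$ lie. So the approach is (i) build a suitable collection of ``canonical'' configurations during preprocessing, precomputing the relevant Fréchet quantities for each, and (ii) at query time, locate $p$ and $q$ within this collection in $\bO(1)$ time and return the stored value.

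Concretely, I would proceed as follows. First, observe that for a segment query with endpoints snapped to graph vertices (the $3$-approximation of \cref{sec:sp} applied to $pq$), the relevant distance is a max of the two half-paths $\fr(\pi_{us'}, ps')$ and $\fr(\pi_{s'v}, s'q)$ plus a clearance term; these in turn are piecewise-nice functions of $p$ and $q$. Following \cite[Lemma~5.8]{approx}, one builds an exponential grid (a hierarchy of annuli) around each relevant reference point, with $\bO(\sfrac{1}{\eps^2}\log\sfrac{1}{\eps}) = \bO(\chi)$ cells in each of two ``directions,'' so that within any grid cell the function value is determined up to a $(1+\eps)$ factor. Taking the product over the two endpoints $p$ and $q$ gives $\bO(\chi^2)$ canonical configurations, hence $\bO(\chi^2)$ space. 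For each configuration we precompute the corresponding $(1+\eps)$-approximate Fréchet value; since computing the Fréchet distance between a segment and an optimal graph path costs $\bO(n\log n)$ via Alt et al.~\cite{frechet_graph} (extended in \cite[Lemma~4.3]{prev}), and we do this once per cell, preprocessing is $\bO(\chi^2 n \log n)$. At query time, locating $p$ and $q$ in the exponential grids is a constant-time arithmetic computation (take logarithms of distances to the reference points and round), so the query is $\bO(1)$.

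The main obstacle — and the part that genuinely requires the earlier references rather than a quick argument — is verifying that a cell of the exponential grid really does pin down the approximate Fréchet distance: one must show that moving $p$ (resp.\ $q$) within a single annulus changes $\min_\pi \fr(\pi, pq)$ by at most a multiplicative $(1+\eps)$ factor, uniformly over all paths $\pi : u \leadsto v$. This is exactly the content that Driemel and Har-Peled establish via their ``simplification + grid'' analysis and that Gudmundsson et al.\ transport to the graph setting; the subtlety is that the optimal path may change as $p$ moves, so the bound has to hold against the envelope of a whole family of distance functions, not a single one. Since the excerpt allows me to cite \cite[Lemma~5.2]{prev} and \cite[Lemma~5.8]{approx} directly, my proof reduces to (a) setting up the two-sided exponential-grid construction, (b) invoking those lemmas for the per-cell approximation guarantee, and (c) the routine bookkeeping of the $\bO(\chi^2)$ space, $\bO(\chi^2 n\log n)$ preprocessing, and $\bO(1)$ query bounds. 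I would present (a) and (c) in detail and defer (b) to the cited results, noting where the graph structure (as opposed to a single curve) enters.
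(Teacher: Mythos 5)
Your proposal matches the paper's approach: the paper imports this result verbatim from \cite[Lemma~5.2]{prev} and only sketches the idea, and your sketch is the same one --- exponential grids around $u$ and $v$ with $\bO(\chi)$ cells each, $\bO(\chi^2)$ precomputed $\min_\pi \fr(\pi, p'q')$ values via Alt et al., and an $\bO(1)$ lookup by snapping $p$ and $q$ to grid cells (with the near/far boundary cases handled by returning $\min_\pi\fr(\pi,uv)$ or $\fr(pq,uv)$ directly). One small correction: your opening paragraph invokes the transit-vertex $s'$ and the $3$-approximation of \cref{sec:sp}, but that machinery plays no role here since $u$ and $v$ are a single fixed pair --- it only enters in \cref{lem:grid_all}; the rest of your argument is unaffected.
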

The idea behind this \lcnamecref{lem:grid} is to construct an exponential grid
around both fixed vertices so that the grid is denser close to the vertices.
There is an upper bound and a lower bound on how far the grid goes, which is
based on \(\eps\) and \(\min_\pi \fr(\pi, uv)\).
If the segment \(pq\) is closer to \(uv\) than the smallest grid cell, then
taking \(\min_\pi \fr(\pi, uv)\) gives us a good approximation; if the segment
is very far, then \(\fr(pq, uv)\) dominates.
Otherwise, we are guaranteed that there are grid points \(p'\) and \(q'\) that
match \(p\) and \(q\) closely with respect to \(u\) and \(v\).
We can simply precompute \(\min_\pi \fr(\pi, p'q')\) for all pairs of points
\(p'\) and \(q'\) and return an appropriate value in constant time when given a
query.
See \cref{fig:report}.

In order to improve the approximation ratio to \(1 + \eps\), we use Lemma~5.3
by Gudmundsson et al.~\cite{prev}, substituting our data structure of
\cref{thm:sp} for their data structure of Lemma~5.1.
The argument is the same: we can construct a grid around each graph vertex and
precompute the distances for all pairs of grid vertices for each transit pair;
and we can store that in the data structure of \cref{thm:sp}.
At query time, when testing each transit vertex \(s\), we subsample the
relevant part of segment
\(pq\) with \(\bO(\sfrac{1}{\eps})\) points to find an optimal spot that should
align with \(s\).
We use the data structure of \cref{thm:sp} to make sure we do not need to
sample too many points.
With our time and space bounds, we get the following
\lcnamecref{lem:grid_all}.
\begin{lemma}\label{lem:grid_all}
Let \(\eps > 0\) and \(\chi = \sfrac{1}{\eps^2}\log\sfrac{1}{\eps}\).
In expected time \(\bO(\tau \chi^2 n^{\sfrac{5}{2}} \log n)\) and using
\(\bO(\tau \chi^2 n\sqrt{n})\) space, we can construct a data structure that,
given a query segment \(pq\) in the plane and a pair of vertices
\(u, v \in V\), returns in \(\bO(\sfrac{\tau}{\eps}\sqrt{n})\) time a
\((1 + \eps)\)-approximation to \(\min_\pi \fr(\pi, pq)\), where
\(\pi: u \leadsto v\).
\end{lemma}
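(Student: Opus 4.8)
The plan is to combine the hierarchical separator data structure of \cref{thm:sp} with the exponential-grid construction of \cref{lem:grid} (the latter already giving a $(1+\eps)$-approximation for a \emph{fixed} pair of grid points standing in for $p$ and $q$), in essentially the way Gudmundsson et al.\ combine their Lemma~5.1 with their Lemma~5.3. Concretely, I would not store a single Fréchet distance per transit pair $(w,s)$ as in \cref{sec:sp}; instead, around each of $w$ and $s$ I build the exponential grid of \cref{lem:grid} (whose radii depend on $\eps$ and on $\min_\pi \fr(\pi, ws)$, the latter already being computed during the construction of \cref{thm:sp}), and I precompute $\min_\pi \fr(\pi, w's')$ for every pair $(w', s')$ of grid points, one grid point near $w$ and one near $s$. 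Each grid has $\bO(\chi) = \bO(\sfrac{1}{\eps^2}\log\sfrac{1}{\eps})$ points, so this replaces each constant-size transit record by an $\bO(\chi^2)$-size table, and each $\bO(n\log n)$ distance computation by $\bO(\chi^2 n\log n)$ of them. Threading this through the recurrences of \cref{thm:sp} multiplies the construction time by $\bO(\chi^2)$, giving $\bO(\tau\chi^2 n^{\sfrac{5}{2}}\log n)$ expected, and the space by $\bO(\chi^2)$, giving $\bO(\tau\chi^2 n\sqrt{n})$, as claimed.

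For the query, I proceed exactly as the distance query of \cref{sec:sp}: follow pointers to the tree nodes containing $u$ and $v$, find their lowest common ancestor $a$ in $\bO(\log n)$ time, and walk from $a$ up to the root, examining every transit vertex $s$ on the way. For each such $s$, rather than reading off a single stored number, I must approximate $\min_{\pi_{us}}\min_{\pi_{sv}}\fr(\pi_{us}\circ\pi_{sv}, pq)$, where the alignment may split $pq$ at any interior point aligned with $s$. Here I use \cref{lem:grid} twice: the grid around $u$ lets me snap $p$ to a grid point $p'$, the grid around $v$ lets me snap $q$ to $q'$, and the grid points around $s$ (shared between the $(u,s)$ and $(s,v)$ records) let me snap the unknown split point to a grid point $s'$; then I subsample the split along $pq$ with $\bO(\sfrac{1}{\eps})$ candidate points, and for each candidate look up the two precomputed distances $\min_\pi\fr(\pi, p's')$ and $\min_\pi\fr(\pi, s'q')$, taking their max. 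Over all $\bO(\tau\sqrt{n})$ transit vertices on the root path, and $\bO(\sfrac{1}{\eps})$ samples each with $\bO(1)$ lookups, the query time is $\bO(\sfrac{\tau}{\eps}\sqrt{n})$; the $\bO(\log n)$ LCA cost and the $\bO(\chi)$-time grid-point searches are dominated by this as long as $\sfrac{1}{\eps}\sqrt n$ beats them, which we may absorb into constants or note explicitly.

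Correctness has two ingredients, both already available. The first is the combinatorial reachability argument: the two lemmas proved in \cref{sec:sp} show that some transit vertex $s$ examined by the query lies on an optimal path $\pi = \argmin_{\pi'}\fr(\pi', pq)$, so it suffices that, for \emph{that} $s$, the value we compute is a $(1+\eps)$-approximation of $\min_{\pi_{us}, \pi_{sv}}\fr(\pi_{us}\circ\pi_{sv}, pq)$, which in turn $3$-approximates (and, after the grid refinement, $(1+\eps)$-approximates) $\opt$ by the decomposition $\fr(\pi,pq)\ge \max(\fr(\pi(u,s), p\,r), \fr(\pi(s,v), r\,q))$ at the split point $r$ aligned with $s$ — the same inequality chain as in the $3$-approximation lemma, now with the slack controlled by $\eps$ because the grid radii are chosen relative to $\min_\pi\fr(\pi,us)$ and $\min_\pi\fr(\pi,sv)$. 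The second ingredient is \cref{lem:grid} itself, which guarantees that snapping $p,q$, and the split point to grid points changes the relevant Fréchet distances by at most a $(1+\eps)$ factor, provided the query segment is neither far closer nor far farther from the transit-pair segments than the grid's inner and outer radii — and in those two extreme regimes the stored $\min_\pi\fr$ value, respectively $\fr(pq, \text{endpoints})$, already dominates. The main obstacle is the bookkeeping in this last point: one must check that the case analysis of \cref{lem:grid} (too close / too far / in range) composes cleanly when applied simultaneously to the $u$-side grid, the $v$-side grid, and the shared $s$-grid, so that the $\max$ of the two looked-up distances, taken over the $\bO(\sfrac1\eps)$ sampled split points, is within $(1+\eps)$ of the true two-piece optimum; the sampling density $\bO(\sfrac1\eps)$ must be shown sufficient here exactly as in Lemma~5.3 of~\cite{prev}, using \cref{thm:sp} to bound how far apart consecutive useful split points can be.
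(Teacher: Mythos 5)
Your proposal is correct and follows essentially the same route as the paper's own (quite terse) argument: the paper also obtains \cref{lem:grid_all} by instantiating Lemma~5.3 of Gudmundsson et al.~\cite{prev} with the separator-hierarchy structure of \cref{thm:sp} in place of their Lemma~5.1, i.e.\ replacing each constant-size transit-pair record by the exponential-grid table of \cref{lem:grid} and, at query time, walking the root path, subsampling $\bO(\sfrac{1}{\eps})$ candidate split points per transit vertex (with the window bounded via the $3$-approximation of \cref{thm:sp}), and reading off precomputed grid-pair distances; the resulting $\chi^2$-factor blowup in construction and space and the $\bO(\sfrac{\tau}{\eps}\sqrt n)$ query bound match your accounting. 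One small imprecision worth noting: the grid around the transit vertex $s$ in the $(u,s)$ record and the grid around $s$ in the $(s,v)$ record are \emph{not} shared — their inner and outer radii are calibrated to $\min_\pi\fr(\pi,us)$ and $\min_\pi\fr(\pi,sv)$ respectively, so the split point must be snapped independently in each record; this does not affect the bounds or the argument, but the word \enquote{shared} should be dropped.
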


\subparagraph*{Reporting a path.}
Next we discuss the modifications needed to report a curve that realises the
\((1 + \eps)\)-approximate distance.
We can perform an approximate distance query first.
Once we have the distance, we can find the transit pairs that realise it; with
these pairs, we can store the next vertex on the optimal path.
We can then repeat these queries with the new pairs.
It remains to show how to perform this sequence of queries consistently,
i.e.\@ so that an approximate route for a subpath also approximates the
complete path.

Recall that when computing the \((1 + \eps)\)-approximation, we consider a ball
of a certain radius around a transit vertex, and we take
\(\bO(\sfrac{1}{\eps})\) sample points on the query segment \(pq\) inside the
ball, to test the Fréchet alignment with the transit vertex.
The first modification is that we impose fixed coordinates for the sample
points.
For some fixed point on the line~\(pq\) and for some constant~\(c\), every
sample point is \(\bO(\sfrac{c}{\eps})\) distance away from the fixed point.

Next, we describe the necessary modifications to the data structure of
\cref{lem:grid_all}.
With each transit pair and for each pair of grid points, in addition to the
Fréchet distance, we also store the first vertex on the optimal path, so
\(u' \in V\) such that for \(\pi' = \argmin_\pi \fr(\pi, pr)\), we have
\(\pi': u \to u' \leadsto s\).
(Here \(r\) is the point on \(pq\) that maps to \(s\).)

The query proceeds as follows.
First, we perform the distance query for \(pq\) and record the optimal transit
vertex \(s\).
Find the point \(r\) among the \(\bO(\sfrac{1}{\eps})\) samples on \(pq\) that
aligns with \(s\).
Query the pairs \((u, s)\) and \((s, v)\) with \(pr\) and \(rq\), respectively,
and retrieve the stored adjacent vertices \(u'\) and \(v'\).
Again, find the optimal alignment points on \(pr\) and \(rq\); find pairs
\((u'', s)\) and \((s, v'')\); repeat until the complete path is reported.
In the special case when \(s = v\) or \(s = u\), only one sequence of queries
has to be performed.
If \(u\) and \(v\) are both in a leaf, we can proceed as if \(s = v\).
See \cref{fig:report}.

\begin{figure}
\centering
\includegraphics{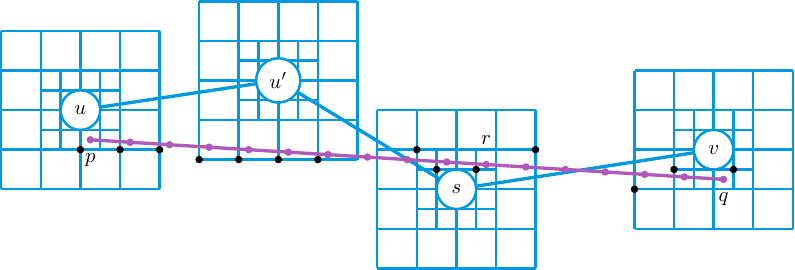}
\caption{A query trajectory \(pq\) is shown in purple, and the reported path in
the graph is shown in blue.
We sample points on \(pq\) at regular distance and snap them to the exponential
grid around the graph vertices.
Once we find \(r\) on \(pq\) that aligns with the transit vertex \(s\), we can
query the pair \((u, s)\) with the (snapped) segment \(pr\) to find the next
vertex \(u'\).}
\label{fig:report}
\end{figure}

Suppose the transit vertex \(s\) is stored in some \(S_i\).
If the optimal path leaves \(P_i\), then it is possible that \(u'\) is not in
\(P_i\), and so the pair \((u', s)\) is not stored in node \(i\).
However, then \(u'\) must be in some separator separating \(P_i\) from a
different subgraph \(P_j\).
Furthermore, note that the separator in question must be on the path from \(i\)
to the root.
Thus, we can go up until we find \(u' \in S_j\) for some \(j < i\).
We can continue the procedure, now for the transit pair \((s, u')\), finding
some \(s'\) so that the path is of the shape \(s \to s' \leadsto u'\).
See \cref{fig:graph,fig:report}.

We briefly analyse the time and space bounds.
We find the next vertex on the path from the free-space diagram when computing
the Fréchet distance.
As we store constant extra information, the preprocessing and space bounds are
unchanged.
For the query time, in addition to the distance query, we report a path of
length \(\ell\).
To find each next vertex, we find the correct transit pair in constant time,
then test \(\bO(\sfrac{1}{\eps})\) alignment options.
We may have to go up the tree; however, as we never go down the tree, that
traversal happens only once per query.
Therefore, the extra time needed to report the path with \(\ell\) vertices is
\(\bO(\log n + \sfrac{\ell}{\eps})\).
It remains to show that the reported path indeed corresponds to a
\((1 + \eps)\)-approximation.

\begin{lemma}
For query \(pq\), if \(\pi = \argmin_{\pi^*} \fr(\pi^*, pq)\) has the shape
\(\pi: u \to u' \leadsto s\), and \(u'\) is aligned to some \(p' \in pq\) under
the Fréchet alignment, then \(\pi' = \argmin_{\pi^*} \fr(\pi^*, p'q)\) of the
shape \(\pi': u' \leadsto s\) is a subpath of \(\pi\).
\end{lemma}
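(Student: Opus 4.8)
The plan is to show that the suffix $\pi(u', s)$ of $\pi$ is itself a minimiser of the Fréchet distance to the residual segment $p'q$ over all $u' \leadsto s$ paths, so that one may take $\pi' = \pi(u', s)$; that the whole reported curve then realises $\fr(\pi, pq)$ follows by induction on the number of reported edges, feeding this lemma into the next residual segment at each step.

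First I would establish the easy inequality $\fr(\pi(u', s), p'q) \le \fr(\pi, pq)$. Fix an optimal Fréchet matching $\phi$ between $\pi$ and $pq$ realising the stated alignment, so $u \leftrightarrow p$, $u' \leftrightarrow p'$, and $s \leftrightarrow q$ under $\phi$. Since $\phi$ is monotone and sends $u'$ to $p'$, restricting it to the part after that pair is a valid matching of $\pi(u', s)$ with $p'q$ of cost at most the cost of $\phi$, i.e.\ at most $\fr(\pi, pq)$; likewise the part before is a matching of the first edge $uu'$ with $pp'$ of cost at most $\fr(\pi, pq)$. In particular $\min_{\rho\,:\,u' \leadsto s}\fr(\rho, p'q) \le \fr(\pi, pq)$.

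Next I would show this minimum is attained by $\pi(u', s)$. Suppose some $\rho : u' \leadsto s$ had $\fr(\rho, p'q) < \fr(\pi(u', s), p'q) \le \fr(\pi, pq)$. The linear interpolation is a valid matching of $uu'$ with $pp'$ respecting $u \leftrightarrow p$ and $u' \leftrightarrow p'$, of cost $\max(\lVert u - p\rVert, \lVert u' - p'\rVert) \le \fr(\pi, pq)$; gluing it at the pair $(u', p')$ to an optimal matching of $\rho$ with $p'q$ yields a matching of the $u \leadsto s$ path $(uu') \circ \rho$ with $pq$ of cost $\max(\lVert u - p\rVert, \lVert u' - p'\rVert, \fr(\rho, p'q)) \le \fr(\pi, pq)$. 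Unless $\lVert u - p\rVert = \fr(\pi, pq)$ or $\lVert u' - p'\rVert = \fr(\pi, pq)$, all three terms are strictly below $\fr(\pi, pq)$, so $(uu') \circ \rho$ strictly beats $\pi$, contradicting minimality of $\pi$ among $u \leadsto s$ paths; hence no such $\rho$ exists and $\pi(u', s)$ is optimal for $p'q$. If $\lVert u' - p'\rVert = \fr(\pi, pq)$, then $\fr(\pi(u', s), p'q) \ge \lVert u' - p'\rVert = \fr(\pi, pq)$ because any matching of a $u' \leadsto s$ path with $p'q$ must pair $u'$ with $p'$, so equality holds there too. The sole leftover case, $\lVert u - p\rVert = \fr(\pi, pq)$ with $\lVert u' - p'\rVert < \fr(\pi, pq)$ \dsh{} the start-point gap dominating \dsh{} is the one where $\pi(u', s)$ need not be an exact minimiser; there $\pi(u', s)$ still achieves value $\le \fr(\pi, pq)$, which is all the reporting guarantee needs, and the algorithm in any case routes this through its separate $s = u$ branch.

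Finally, writing $\pi = (uu') \circ \pi(u', s)$ and applying the lemma to the pair $(u', s)$ with segment $p'q$ identifies the next reported edge and a new suffix $\pi(u'', s)$ of $\pi$ optimal for the next residual segment, and so on down to $s$ or a leaf; concatenating the reported edges with the final stored suffix rebuilds a $u \leadsto s$ path whose Fréchet distance to $pq$ is at most $\fr(\pi, pq)$, the extra $(1 + \eps)$ factor entering only through the snapping of the $\bO(\sfrac{1}{\eps})$ sample points that is analysed separately. I expect the genuine obstacle to be precisely this tie: the gluing construction delivers ``$\le \fr(\pi, pq)$'' rather than strict inequality, so one cannot literally contradict optimality of $\pi$ without first fixing a consistent representative among optimal paths, or else settling for the reading ``$\pi'$ may be taken equal to $\pi(u', s)$'', which is what the reconstruction actually relies on.
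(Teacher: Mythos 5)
Your proof and the paper's are directed at different halves of the statement, and you only cover the half the paper treats as obvious. The paper dismisses the exact-case suffix-optimality in a single sentence (``if the distances were computed exactly, the statement would clearly hold'') and devotes the entire proof to the consistency of the discretisation: because the sample points on the line are placed at fixed coordinates independent of the query segment, and because the exponential grids around the graph vertices are likewise fixed, the sample/grid points used for the residual query on \(p'q\) are exactly a subsequence of those used for \(pq\); the distances for grid-point pairs are stored exactly, so the recursive reporting queries cannot disagree with the original distance query, and the pieces snap together into one coherent path. You instead prove (most of) the exact-case assertion, by restricting the optimal Fréchet matching to get \(\fr(\pi(u',s),p'q)\le\fr(\pi,pq)\) and then gluing a hypothetical better \(\rho:u'\leadsto s\) onto the edge \(uu'\) to contradict minimality of \(\pi\); this is a valid and worthwhile argument that the paper skips, and your observation that there is a genuine tie case is correct and sharper than the paper's ``clearly.'' But you then hand-wave precisely the part the paper considers the content of the lemma, with ``the extra \((1+\eps)\) factor\ldots is analysed separately.'' That conflates accuracy with consistency: the worry is not that each query is accurate in isolation, it is that the discrete computation on \(p'q\) must use the same sample and grid points as the computation on \(pq\) so that the vertex \(u'\) retrieved by the first query is the one the second query starts from \dsh{} this is exactly why the algorithm was modified to fix the sample-point coordinates, and your proof never engages with it.

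Separately, your handling of the leftover tie case is wrong. You claim that when \(\lVert u-p\rVert=\fr(\pi,pq)\) and \(\lVert u'-p'\rVert<\fr(\pi,pq)\), ``the algorithm in any case routes this through its separate \(s=u\) branch.'' That branch fires when the start vertex coincides with the transit vertex \(s\); it has nothing to do with whether the start-point distance \(\lVert u-p\rVert\) happens to realise the maximum in the Fréchet matching. What actually resolves this case is that \(\pi(u',s)\) still attains value \(\le\fr(\pi,pq)\), so it may be taken as a (possibly non-unique) minimiser \(\pi'\); you do say this, but the appeal to the \(s=u\) branch should be deleted.
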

\begin{proof}
Without loss of generality, we can assume that \(s\) is a transit vertex.
If the distances were computed exactly, the statement would clearly hold.
We need to show that the sampling and the grid do not introduce
inconsistencies.

Recall that the sample points are placed on the line segment independently of
context.
Therefore, the location of sample points is the same on \(pq\) and \(p'q\).
Furthermore, we always snap these original sample points to the grid, and the
grid does not depend on the path.
Therefore, we can view \(pq\) as a sequence of grid points that all possible
sample points would snap to; and \(p'q\) then snaps to a subsequence of those
grid points.
For the pairs of grid points, the distances are computed directly.
Therefore, we do not introduce any additional error, compared to a distance
query, and so the reported path corresponds to a \((1 + \eps)\)-approximation.
\end{proof}

\subparagraph*{Data structure for finding the path endpoints.}
So far, we only required that \(P\) is \(\tau\)-lanky.
For the next data structure, we also need \(P\) to be a \(t\)-spanner.
Recall that \(d_P(u, v)\) denotes the shortest path distance in the graph
between the vertices \(u\) and \(v\).
If \(P = (V, E)\) is a \(t\)-spanner, then for any \(u, v \in V\), we have
\(d_P(u, v) \leq t \cdot \lVert u - v\rVert\).
To solve \cref{prob:segm}, we need one more data structure.
\Cref{lem:grid_all} still requires us to pick vertices \(u\) and \(v\) to check
the paths \(\pi: u \leadsto v\).
We need to be able to select a subset of candidate vertices so that we can
obtain a \((1 + \eps)\)-approximation, but the subset is still small enough.
To that aim, we perform the same procedure as Gudmundsson et al.~\cite{prev},
but get different bounds.

In particular, we run Gonzalez's \(k\)-centre clustering
algorithm~\cite{gonzalez} for \(k = n\) on the vertices of the graph
\(P = (V, E)\) using the distance \(d_P\).
In short, the algorithm selects cluster centres from \(V\) iteratively,
starting with a random one; and each following one is the furthest away from
any other centre.
Let \(c_1\) be the first (random) centre.
Define the \emph{radius} of a clustering to be the maximum distance from any
vertex to its closest centre.
Denote \(C_i = \{c_1, \dots, c_i\}\).
Then for all \(2 \leq i \leq n\), we compute
\[c_i = \argmax_{v \in V} \min_{c \in C_{i - 1}} d_P(v, c)\,,\qquad
r_i = \max_{v \in V} \min_{c \in C_i} d_P(v, c)\,.\]
We obtain a sequence \(\langle (C_1, r_1), \dots, (C_n, r_n)\rangle\), where
\(r_n = 0\), since all vertices are centres.
Using this sequence, we can show the following \lcnamecref{lem:clustered}.

\begin{lemma}\label{lem:clustered}
Let \(P = (V, E)\) be a \(t\)-spanner and let \(S\) be a square in the plane
with side length \(2r\).
Then there exists a set of vertices \(T \subseteq V\) satisfying two
properties:
\begin{enumerate}
\item \(\lvert T\rvert = \bO((\sfrac{t}{\eps})^2)\), and
\item for all \(v \in V \cap S\), there is \(z \in T\) such that
\(d_P(v, z) \leq \eps r\).
\end{enumerate}
\end{lemma}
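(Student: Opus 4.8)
The plan is to read off $T$ directly from the Gonzalez sequence $\langle (C_1, r_1), \dots, (C_n, r_n)\rangle$ at the scale dictated by $\eps r$. First I would let $k$ be the smallest index with $r_k \leq \eps r$; this index exists because $r_n = 0$, and since the radii $r_1 \geq r_2 \geq \dots$ are non-increasing, minimality gives $r_{k-1} > \eps r$ whenever $k \geq 2$ (the case $k = 1$ is trivial, since then $|T| \leq 1$). Then I would define $T$ to consist, for each $v \in V \cap S$, of one centre of $C_k$ nearest to $v$ under $d_P$ (breaking ties arbitrarily). Property~2 is then immediate: by the definition of $r_k$, the chosen centre $z$ for $v$ satisfies $d_P(v, z) = \min_{c \in C_k} d_P(v, c) \leq r_k \leq \eps r$.

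The real work is bounding $|T|$, which I would do by a packing argument in the plane resting on three ingredients. \emph{(i) Localisation:} every $z \in T$ has $d_P(z, v) \leq \eps r$ for some $v \in S$, and since $\lVert z - v\rVert \leq d_P(z, v)$, the whole set $T$ lies in the square $S'$ obtained by inflating $S$ by $\eps r$ on each side, of side length $2r(1 + \eps) < 4r$. \emph{(ii) Spread:} the greedy choice $c_i = \argmax_{v \in V} \min_{c \in C_{i-1}} d_P(v, c)$ means exactly that $r_{i-1} = \min_{c \in C_{i-1}} d_P(c_i, c)$, so for $j < i \leq k$ we get $d_P(c_i, c_j) \geq r_{i-1} \geq r_{k-1} > \eps r$; hence distinct elements of $T \subseteq C_k$ are at $d_P$-distance more than $\eps r$. \emph{(iii) Spanner conversion:} applying $d_P(x, y) \leq t\,\lVert x - y\rVert$ turns this graph-metric separation into a Euclidean one, so distinct elements of $T$ are more than $\eps r / t$ apart in $\R^2$. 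Combining, $T$ is a set of points in a square of side less than $4r$ that are pairwise more than $\eps r/t$ apart, and the standard disjoint-disk packing bound gives $\lvert T\rvert = \bO((\sfrac{t}{\eps})^2)$.

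The main obstacle — or rather, the one point that makes the $t$-spanner hypothesis indispensable — is ingredient (iii). The Gonzalez construction spreads centres apart only in the graph metric $d_P$, and two vertices far apart in $d_P$ can be arbitrarily close in the plane, so a packing argument in $\R^2$ would simply fail without it; converting via $d_P(x,y) \leq t\lVert x - y\rVert$ is precisely what pays for the $t^2$ in the final bound. Beyond that, the remaining care is routine: checking that $\eps < 1$ is used only to bound $2r(1+\eps) < 4r$, that the tie-breaking in "nearest centre" is harmless, and that the edge case $k = 1$ is handled.
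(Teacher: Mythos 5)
Your proof is correct and takes essentially the same route as the paper: threshold the Gonzalez sequence at scale $\eps r$, use the greedy-spread property to obtain pairwise $d_P$-separation greater than $\eps r$ among the selected centres, convert this to Euclidean separation greater than $\eps r/t$ via the spanner bound, and finish with a disk-packing argument in an $\bO(r)$-sided square. The only differences are cosmetic: you define $T$ by picking a nearest centre for each $v \in V\cap S$ and then localise via $\lVert z - v\rVert \leq d_P(z,v)$, whereas the paper directly sets $T = C_{i+1}\cap S'$ for a concentric square $S'$ of side $4r$.
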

\begin{proof}
If \(r_1 < \eps r\), pick \(T = \{c_1\}\); then the first property holds
immediately, and the second property holds by definition of~\(r_1\).
Otherwise, note that \(r_n = 0\), so now let \(i\) be the index so that
\(r_i \geq \eps r\) and \(r_{i + 1} < \eps r\).
Take \(S'\) to be the square concentric with \(S\), but with the side length of
\(4r\), and let \(T = C_{i + 1} \cap S'\).
The second property is immediately satisfied, since any vertex, including those
in \(S\), is closer than \(\eps r\) to some centre; and choosing \(S'\) this
way ensures that we cannot exclude any relevant centres, since \(\eps < 1\).

To see that the first property is true, consider \(C_i\).
Due to the sequence of picking centres, we know that any two vertices \(c_j\)
and \(c_\ell\) with \(j < \ell\) in \(T\) are far apart, i.e.\@
\(d_P(c_j, c_\ell) \geq \eps r\).
If this were not true, then \(c_\ell\) would not have been chosen as a centre,
since there still are vertices in \(V\) that are further than \(\eps r\) away
from any centre.
But then we know
\[\eps r \leq d_P(c_j, c_\ell) \leq t \cdot \lVert c_j - c_\ell\rVert\,,\]
so any two points are at least \(\sfrac{\eps}{t} \cdot r\) apart in the plane.
In a square with side length \(4r\), we can only pack
\(\bO((\sfrac{t}{\eps})^2)\) of these vertices.
\(C_{i + 1}\) has only one more vertex, so the first property holds for \(T\),
as well.
\end{proof}

Gudmundsson et al.~\cite{prev} show how to construct a data structure based on
their version of \cref{lem:clustered}.
The proof is the same; only the bounds change.

\enlargethispage*{2\baselineskip}
\begin{lemma}\label{lem:cluster_ds}
Let \(P = (V, E)\) be a \(t\)-spanner, and let \(0 \leq \eps \leq 1\).
In \(\bO(n^2 \log n)\) time and using \(\bO(n \log n)\) space, we can construct
a data structure that, given a query square \(S\) in the plane with side length
\(2r\), returns a set of vertices \(T\) satisfying \cref{lem:clustered} in time
\(\bO(\log n + (\sfrac{t}{\eps})^2))\).
\end{lemma}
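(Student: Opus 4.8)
The plan is to transcribe the construction of Gudmundsson et al.~\cite{prev} almost verbatim, since \cref{lem:clustered} slots in exactly where their corresponding packing statement does; only the output-size parameter, and hence the query time, changes.

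First I would realise, offline, the Gonzalez clustering sequence \(\langle(C_1,r_1),\dots,(C_n,r_n)\rangle\) from \cref{lem:clustered}. The distances \(d_P\) driving the clustering are obtained by running Dijkstra's algorithm from each vertex; as \(\lvert E\rvert\le n\), one run costs \(\bO(n\log n)\), so \(\bO(n^2\log n)\) in total. I would interleave this with Gonzalez's rule: keep, for every vertex, the distance to the nearest centre chosen so far; when a new centre \(c_i\) is added, run one Dijkstra from \(c_i\), refresh the \(n\) nearest-centre values, and take the argmax to get \(c_{i+1}\) and the argmax value to get \(r_{i+1}\). This stays within \(\bO(n^2\log n)\) time and, importantly for the space bound, needs only \(\bO(n)\) working memory at any instant, so the full \(n\times n\) distance matrix is never stored. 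What is retained is the point sequence \(c_1,\dots,c_n\) with insertion indices and the non-increasing array \(r_1\ge\dots\ge r_n=0\), which is \(\bO(n)\) space.

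Next I would build the query structure implementing the recipe in the proof of \cref{lem:clustered}: on input of a square \(S\) of side \(2r\), return \(\{c_1\}\) if \(r_1<\eps r\), and otherwise binary-search the array \((r_i)\) for the index \(i\) with \(r_i\ge\eps r>r_{i+1}\) and report \(C_{i+1}\cap S'\), where \(S'\) is the square concentric with \(S\) of side \(4r\). The binary search is \(\bO(\log n)\). Reporting \(C_{i+1}\cap S'\) is a geometric reporting query over the points \(\{c_1,\dots,c_n\}\): an axis-parallel rectangle constraint together with a prefix constraint \(j\le i+1\) on the insertion index. I would answer it with the range-tree structure of~\cite{prev}: a range tree on one plane coordinate whose secondary structures are priority search trees keyed on the other plane coordinate with the insertion index as the priority, so the rectangle-plus-prefix query splits into \(\bO(\log n)\) three-sided subqueries (a partially persistent variant of the same idea works equally well). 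This takes \(\bO(n\log n)\) space and is built well within the \(\bO(n^2\log n)\) preprocessing budget, so the totals are unchanged.

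Correctness of the returned set is immediate from \cref{lem:clustered}, and the single quantitative change from~\cite{prev} is the output size: by property~(1) of \cref{lem:clustered}, \(\lvert C_{i+1}\cap S'\rvert=\bO((\sfrac{t}{\eps})^2)\), because the relevant centres are pairwise at graph-distance at least \(r_i\ge\eps r\), hence at Euclidean distance at least \(\sfrac{\eps}{t}\cdot r\), and at most \(\bO((\sfrac{t}{\eps})^2)\) such points fit in a square of side \(4r\). Feeding this output size into the reporting cost gives query time \(\bO(\log n+(\sfrac{t}{\eps})^2)\). I expect the one genuinely delicate point to be the reporting structure: one must accommodate the prefix-in-insertion-order constraint alongside the planar rectangle without an extra logarithmic factor in space — which is exactly what the range-tree/priority-search-tree (or persistence) combination delivers — while everything else is a direct rewrite of~\cite{prev} with the new constant.
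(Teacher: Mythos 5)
Your high-level approach matches what the paper does: the paper's entire proof of this lemma is the remark that \enquote{the proof is the same [as Gudmundsson et al.'s]; only the bounds change,} and you have correctly identified that the only change is the output size \(\bO((\sfrac{t}{\eps})^2)\) coming from \cref{lem:clustered}. Your preprocessing analysis (\(n\) runs of Dijkstra interleaved with Gonzalez's rule, keeping only \(\bO(n)\) working memory so the distance matrix is never stored) and the query logic (return \(\{c_1\}\) if \(r_1 < \eps r\), otherwise binary-search for the crossover index \(i\) and report \(C_{i+1} \cap S'\)) are all sound and consistent with the paper's intent.

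The one step I would push back on is the reporting structure. A range tree on one coordinate with priority search trees as secondary structures decomposes the \(x\)-range into \(\bO(\log n)\) canonical nodes, and each canonical node's priority search tree is queried in \(\bO(\log n_v + k_v)\) time; summed, this is \(\bO(\log^2 n + k)\), not the \(\bO(\log n + k)\) that the lemma claims. Fractional cascading, which is what normally shaves the extra \(\log n\) off a range tree, does not straightforwardly combine with priority search trees, and the \enquote{partially persistent variant} you allude to runs into the same issue: a persistent 2-D range tree with fractional cascading over \(n\) insertions does not obviously stay within \(\bO(n \log n)\) space. So either the query bound you state needs the extra \(\log n\), or you need a different reporting structure (e.g.\@ one that exploits the hierarchical net-like structure of the Gonzalez centres, so that once the square is located one never touches more than \(\bO((\sfrac{t}{\eps})^2)\) nodes with index beyond the cutoff). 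This discrepancy is harmless for the paper's downstream theorems, since this query cost is dominated by the \(\bO(\tau t^4 \eps^{-5}\sqrt{n})\) term in \cref{thm:segm} and the analogous \cref{lem:ld_ds} already carries a \(\log^2 n\), but as a proof of the lemma as literally stated it has a gap you should close or flag.
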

\pagebreak

The main \lcnamecref{thm:segm} of this \lcnamecref{sec:segm} follows using our
data structures.

\begin{theorem}\label{thm:segm}
Given a \(\tau\)-lanky \(t\)-spanner of complexity \(n\), we can construct the
data structure for \cref{prob:segm} in expected time
\(\bO(\tau \eps^{-4} \log^2(\sfrac{1}{\eps}) n^{\sfrac{5}{2}} \log n)\) and
using \(\bO(\tau \eps^{-4} \log^2(\sfrac{1}{\eps}) n\sqrt{n})\) space, so the
distance queries can be answered in time
\(\bO(\tau t^8 \eps^{-9} \sqrt{n} \log n (\log n + \sfrac{\tau}{\eps}))\); and
the reporting queries for a path of length \(\ell\) can be answered in
\(\bO(\sfrac{\ell}{\eps})\) additional time.
\end{theorem}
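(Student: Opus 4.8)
\subparagraph*{Proof plan.}

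The plan is to bolt together the two data structures built in this section. At preprocessing time I would build the structure of \cref{lem:grid_all} together with its reporting extension described above: for any segment \(pq\) and any fixed pair of vertices \(u, v \in V\), it returns a \((1 + \eps)\)-approximation to \(\min_{\pi : u \leadsto v}\fr(\pi, pq)\) in time \(\bO((\tau/\eps)\sqrt n)\), using \(\bO(\tau\chi^2 n\sqrt n)\) space after \(\bO(\tau\chi^2 n^{5/2}\log n)\) expected preprocessing, where \(\chi = \eps^{-2}\log(1/\eps)\). I would also build the clustering structure of \cref{lem:cluster_ds} in \(\bO(n^2\log n)\) time and \(\bO(n\log n)\) space. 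The construction time and space of the combined structure are the sums of these, which are dominated by \cref{lem:grid_all} (note \(\chi^2 = \eps^{-4}\log^2(1/\eps)\)), giving the stated construction bounds.

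For a distance query on a segment \(pq\), the only thing missing is a small set of candidate endpoints for the optimal path. If \(\pi^* = \argmin_\pi\fr(\pi, pq)\) runs from \(u^*\) to \(v^*\), then under any Fréchet matching \(p\) is matched to \(u^*\) and \(q\) to \(v^*\), so \(\lVert u^* - p\rVert \leq \opt\) and \(\lVert v^* - q\rVert \leq \opt\); hence \(u^*\) lies in the axis-aligned square of side \(2\opt\) around \(p\) and \(v^*\) in the one around \(q\). As in \cite{prev}, I would enumerate a geometric sequence of \(\bO(\log n)\) candidate scales \(r\), the relevant range being pinned down from the distances of \(p\) and \(q\) to their nearest graph vertices together with the spanner property; for each \(r\) I would query \cref{lem:cluster_ds} on the squares of side \(2r\) around \(p\) and around \(q\) to obtain candidate sets \(T_p\) and \(T_q\), each of size \(\bO((t/\eps)^2)\) by \cref{lem:clustered}, and for every pair \((u, v) \in T_p \times T_q\) run the fixed-endpoint query of \cref{lem:grid_all} on \(pq\), returning the minimum value obtained over all scales and all pairs (and recording the winning pair for reporting). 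Note that \cref{lem:grid}, and hence \cref{lem:grid_all}, already copes with arbitrarily near or far segments \(pq\) for a \emph{fixed} pair, so the scale search is needed only to choose the candidate endpoints.

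Correctness is clean because every call to \cref{lem:grid_all} returns at least \(\min_{\pi : u \leadsto v}\fr(\pi, pq) \geq \opt\), so the reported minimum never falls below \(\opt\), and it suffices to exhibit one pair that attains \((1 + \bO(\eps))\opt\). Take the scale \(r\) with \(\opt \leq r = \bO(\opt)\); then \(u^*, v^*\) lie in the queried squares, so \cref{lem:clustered} gives \(u \in T_p\) and \(v \in T_q\) with \(d_P(u^*, u) \leq \eps r\) and \(d_P(v^*, v) \leq \eps r\). Prepending the shortest \(u \leadsto u^*\) path and appending the shortest \(v^* \leadsto v\) path to \(\pi^*\) yields a path \(\pi : u \leadsto v\) with \(\fr(\pi, pq) \leq \opt + \bO(\eps r) = (1 + \bO(\eps))\opt\): extending a curve matched to \(pq\) by a subpath of length \(L\) at an endpoint raises its Fréchet distance to \(pq\) by at most \(L\), since that subpath lies within \(L\) of the old endpoint and can be matched entirely to the corresponding endpoint of \(pq\). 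Hence the \((u, v)\)-call returns at most \((1 + \eps)\fr(\pi, pq) = (1 + \bO(\eps))\opt\), and rescaling \(\eps\) by a constant gives a \((1 + \eps)\)-approximation. For reporting I would invoke the reporting version of \cref{lem:grid_all} on the recorded winning pair; by the reporting analysis above this costs \(\bO(\log n + \ell/\eps)\) for a path with \(\ell\) vertices, and the \(\bO(\log n)\) summand is absorbed into the distance-query time, leaving \(\bO(\ell/\eps)\) additional time.

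For the running time, each of the \(\bO(\log n)\) scales costs \(\bO(\log n + (t/\eps)^2)\) for the two clustering queries and up to \(\lvert T_p\rvert \cdot \lvert T_q\rvert = \bO((t/\eps)^4)\) fixed-endpoint queries of cost \(\bO((\tau/\eps)\sqrt n)\) each; combining this with the bounds of \cref{lem:grid_all,lem:cluster_ds} gives the stated query time \(\bO(\tau t^8 \eps^{-9}\sqrt n\log n(\log n + \tau/\eps))\). I expect the main obstacle to be making the scale enumeration rigorous---bounding the number of scales that must be tried and checking that the error incurred at the ``right'' scale composes correctly through the prepend/append step and through the \((1 + \eps)\) guarantee of \cref{lem:grid_all}---while the remaining accounting is routine once that inequality and the bounds of \cref{lem:grid_all,lem:cluster_ds} are in hand.
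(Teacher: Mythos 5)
Your overall structure is right\dsh build the data structures of \cref{lem:grid_all} and \cref{lem:cluster_ds}, use \cref{lem:cluster_ds} to obtain candidate endpoint pairs near $p$ and near $q$, run the fixed-endpoint query of \cref{lem:grid_all} on each pair, and for reporting invoke the reporting extension on the winning pair. The construction bounds and the $\bO(\ell/\eps)$ reporting overhead follow exactly as you say. The divergence, and the gap, is in how the optimisation over the unknown $\opt$ is handled.

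The paper does \emph{not} enumerate a geometric sequence of $\bO(\log n)$ scales; it runs \emph{parametric search} on top of a decision procedure, with $N_P = \sqrt n$ processors, $T_P = \bO((t/\eps)^4(\tau/\eps + \log n))$, and $T_S = \bO(\tau t^4\eps^{-5}\sqrt n)$, giving the query time $\bO(N_P T_P + T_P T_S \log N_P) = \bO(\tau t^8\eps^{-9}\sqrt n\log n(\log n + \tau/\eps))$. Your scale enumeration is a genuinely different optimisation strategy, and as you already flag, the crux is bounding the number of scales: the ratio between the a priori upper and lower bounds on $\opt$ (obtainable from nearest-vertex distances and the spanner property) is a spread that depends on the geometry, not on $n$, so it is not clear that $\bO(\log n)$ scales suffice in general. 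Without a bound on the spread (or a prior constant-factor approximation to $\opt$ that would let you restrict to $\bO(\log(1/\eps))$ scales), the enumeration is not rigorous\dsh and this is precisely the difficulty that parametric search is deployed to sidestep. A second, smaller issue: your own accounting of your scheme yields $\bO(\tau t^4\eps^{-5}\sqrt n\log n)$, not the stated $\bO(\tau t^8\eps^{-9}\sqrt n\log n(\log n + \tau/\eps))$; the extra $t^4\eps^{-4}(\log n + \tau/\eps)$ factor in the theorem statement comes specifically from the $T_P T_S\log N_P$ term of parametric search, so your claim that the enumeration ``combines to give the stated query time'' does not compute. If your scale enumeration could be made rigorous it would actually improve on the stated bound, which should itself be a hint that a step is missing.
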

\begin{proof}
The changes we made for reporting do not affect \cref{lem:cluster_ds}, so the
proof of Gudmundsson et al.~\cite{prev} applies. 
We only discuss the time bounds.
Preprocessing simply consists of building the data structures for
\cref{lem:grid_all,lem:cluster_ds}.
For the query time, consider first the decision version of the algorithm.
We query the data structure of \cref{lem:cluster_ds} twice; and then for every
pair of possible matching vertices, we query the data structure of
\cref{lem:grid_all}.
The second step dominates, taking \(\bO(\tau t^4 \eps^{-5} \sqrt{n})\) time.

For the optimisation version, we use parametric search with \(N_P = \sqrt{n}\)
parallel processors.
The sequential version runs in the same time as the decision version, so
\(T_S = \bO(\tau t^4 \eps^{-5} \sqrt{n})\).
In the parallel version, querying the distance data structure can be done with
\(\sqrt{n}\) processors, each performing \(\bO(\sfrac{\tau}{\eps})\) amount of
work, then combining the values to find the minimum in \(\bO(\log n)\) time.
Thus, \(T_P = \bO((\sfrac{t}{\eps})^4 \cdot (\sfrac{\tau}{\eps} + \log n))\).
The time for the optimisation version is now
\(\bO(N_P T_P + T_P T_S \log N_P)\).
This amounts to
\(\bO(\tau t^8 \eps^{-9} \sqrt{n} \log n (\log n + \sfrac{\tau}{\eps}))\).

For the reporting query, perform the distance query and record the optimal path
endpoints; then, as we discussed, it costs extra \(\bO(\sfrac{\ell}{\eps})\)
time to report a path of length \(\ell\).
\end{proof}

\section{General Map-Matching Queries}\label{sec:mmq}
In this \lcnamecref{sec:mmq}, we generalise the problem again to handle a
polygonal curve rather than a line segment as a query.
The procedure is very similar; however, we want to make sure that the Fréchet
alignment between a query curve and a path can align vertices of the query to
points on graph edges, and not just to graph vertices.
To that effect, we need to extend \cref{lem:clustered} so we can sample a small
number of points on graph edges.

Gudmundsson et al.~\cite{prev} use \(c\)-packedness again; however, in our
setting, the \(t\)-spanner property is not sufficient, as it does not give us
guarantees about the graph distance between points on the edges.
Here we require the graph to also be \(\lambda\)-low density.

\begin{lemma}\label{lem:ld_points}
Let \(P = (V, E)\) be a \(\lambda\)-low-density \(t\)-spanner, let
\(F = \{f \in \R^2 \mid f \in e, e \in E\}\), and let \(S\) be a square in the
plane with side length \(2r\).
Then there exists a set of points \(T \subseteq F\) satisfying two
properties:
\begin{enumerate}
\item \(\lvert T\rvert = \bO(\sfrac{t^2}{\eps^2} + \sfrac{\lambda}{\eps^3})\),
and
\item for all \(p \in F \cap S\), there is \(z \in T\) such that
\(d_P(p, z) \leq \eps r\).
\end{enumerate}
\end{lemma}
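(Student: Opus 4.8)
The plan is to take for $T$ the union of a vertex cover of a slightly enlarged copy of $S$ \dsh which already handles all points lying on \emph{short} edges \dsh together with a direct sampling of the few \emph{long} edges that meet $S$. Concretely, first apply \cref{lem:clustered} with the parameter $\sfrac{\eps}{4}$ to the square $S'$ concentric with $S$ of side length $4r$; this gives a set $T_V \subseteq V$ with $\lvert T_V\rvert = \bO(\sfrac{t^2}{\eps^2})$ such that every vertex of $V \cap S'$ lies within graph distance $\sfrac{\eps r}{2}$ of some vertex of $T_V$. Call an edge $e$ \emph{short} if $\lvert e\rvert \le \eps r$ and \emph{long} otherwise. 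If a point $p \in F \cap S$ lies on a short edge $e$ and $w$ is the endpoint of $e$ nearer to $p$, then the sub-segment of $e$ from $p$ to $w$ is a path of length at most $\sfrac{\lvert e\rvert}{2} \le \sfrac{\eps r}{2}$, so $d_P(p, w) \le \sfrac{\eps r}{2}$; moreover $\lVert p - w\rVert \le \sfrac{\eps r}{2} < r$, so $w \in S'$. Taking $z \in T_V$ with $d_P(w, z) \le \sfrac{\eps r}{2}$, the triangle inequality yields $d_P(p, z) \le \eps r$, so $T_V$ alone covers every point of $F \cap S$ that lies on a short edge.

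It remains to deal with long edges. Since $S$ can be covered by $\bO(\sfrac{1}{\eps^2})$ disks of radius $\sfrac{\eps r}{2}$, and by \cref{def:density} each such disk is intersected by at most $\lambda$ edges of length at least $\eps r$, at most $\bO(\sfrac{\lambda}{\eps^2})$ long edges intersect $S$. For each such edge $e$ I would place sample points along the sub-segment $e \cap S$ at spacing at most $\eps r$, including the two endpoints of $e \cap S$; as $e \cap S$ has length at most the diameter of $S$, which is $\bO(r)$, this costs only $\bO(\sfrac{1}{\eps})$ points per edge. Let $T = T_V \cup (\text{all of these samples})$; then $\lvert T\rvert = \bO(\sfrac{t^2}{\eps^2}) + \bO(\sfrac{\lambda}{\eps^2})\cdot\bO(\sfrac{1}{\eps}) = \bO(\sfrac{t^2}{\eps^2} + \sfrac{\lambda}{\eps^3})$, which is the first property. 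If $p \in F \cap S$ lies on a long edge $e$, then $p \in e \cap S$, so some sample $z$ on $e \cap S$ satisfies $\lVert p - z\rVert \le \eps r$; since $p$ and $z$ lie on the same straight edge, $d_P(p, z) \le \lVert p - z\rVert \le \eps r$. Combined with the short-edge case, this establishes the second property.

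The two covering and packing estimates are routine; the only thing that needs care is matching the short/long threshold, the enlargement of $S$ to $S'$, and the accuracy parameter handed to \cref{lem:clustered}, so that the $\eps r$ budget splits cleanly into one half for walking along a short edge to a vertex and one half for reaching $T_V$. The main point to appreciate is that both hypotheses are genuinely used, and for different reasons: the $t$-spanner property enters only through \cref{lem:clustered}, and hence is what covers vertices and short-edge points, whereas $\lambda$-low density is precisely what bounds the number of long edges meeting $S$ \dsh a bound that the spanner property alone cannot provide.
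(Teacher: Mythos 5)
Your proposal is correct and follows essentially the same approach as the paper: split edges at the threshold $\eps r$, invoke \cref{lem:clustered} (rescaled) to cover vertices and hence points on short edges, and place $\bO(\sfrac{1}{\eps})$ samples on each of the $\bO(\sfrac{\lambda}{\eps^2})$ long edges meeting the (enlarged) square, the latter bound coming from $\lambda$-low density. The one small difference is that you apply \cref{lem:clustered} directly to the enlarged square $S'$ with parameter $\sfrac{\eps}{4}$, which is a slightly cleaner way to obtain the needed guarantee for endpoints of short edges lying in $S' \setminus S$; the paper's version applies it to $S$ with $\sfrac{\eps}{2}$ and implicitly relies on the centre set being taken over $S'$ inside the proof of \cref{lem:clustered}.
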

\begin{proof}
We can use \cref{lem:clustered} with \(\eps' = \sfrac{\eps}{2}\) to obtain the
set \(T_1\) of size \(\bO((\sfrac{t}{\eps'})^2)\) so that for all
\(v \in V \cap S\), \(d_P(p, z) \leq \eps' r\) for some \(z \in T_1\).
Define \(E_r \subseteq E\) to contain the edges of length at least \(\eps r\).
Let \(S'\) be a square concentric with \(S\) but with the side length \(4r\).
For each \(e \in E_r\), choose \(\bO(\sfrac{1}{\eps})\) evenly spaced points
on \(e \cap S'\) with the distance between them of at most \(\eps r\).
Let \(T_2\) be the set of all such points, and assign \(T = T_1 \cup T_2\).

We first show that the first property holds.
For \(T_2\), we need to bound the size of \(E_r \cap S'\).
As \(P\) is \(\lambda\)-low density, we know that there are at most \(\lambda\)
edges of length at least \(\eps r\) intersecting any disk of diameter
\(\eps r\), and every edge has \(\bO(\sfrac{1}{\eps})\) sample points.
We can cover \(S'\) with \(\bO((\sfrac{1}{\eps})^2)\) such disks, so
\(\lvert T_2\rvert = \bO(\lambda (\sfrac{1}{\eps})^3)\).
Therefore,
\(\lvert T\rvert = \bO(\sfrac{t^2}{\eps^2} + \sfrac{\lambda}{\eps^3})\).

Now consider the second property.
Note that \(V \subset F\).
For any \(v \in V \cap S\), we immediately conclude that the property holds by
\cref{lem:clustered}.
For any \(p \in e \cap S\), \(e \in E\) with \(\lvert e\rvert \leq \eps r\),
note that both endpoints of \(e\) lie in \(S'\).
So there is a vertex \(v \in V \cap S'\) so that \(d_P(p, v) \leq \eps' r\),
and by \cref{lem:clustered}, \(d_P(v, z) \leq \eps' r\) for some \(z \in T_1\).
Therefore, \(d_P(p, z) \leq 2\eps' r = \eps r\).
Finally, for any \(p \in e \cap S\), \(e \in E_r\), it is clear that there is a
point not further than \(\eps r\) in \(T_2\).
\end{proof}

We will build a data structure similar to Gudmundsson et al.'s~\cite{prev}.
We start by stating a definition of \(\lambda\)-low density in \(\R^3\).

\begin{definition}\label{def:ld_3d}
A set of objects in \(\R^3\) is \(k\)-low density if, for every axis-parallel
cube \(H_r\) with side length \(r\), there are at most \(k\) objects of size at
least \(r\) that intersect \(H_r\).
The size of an object is the side length of its smallest axis-parallel
enclosing cube.
\end{definition}

\begin{figure}
\centering
\begin{tikzpicture}
\begin{axis}[axis lines=middle,ticks=none,axis line style={draw=none},view={10}{20},
colormap/viridis high res,colormap/blackwhite,mesh/interior colormap name=viridis high res,
xmin=-5,xmax=10,ymin=-5,ymax=10,zmin=-.5,zmax=1.5,trig format plots=rad,width=9cm]
\addplot3[z buffer=sort,surf,domain=5:0,y domain=0:1]({x - y*sqrt(8)}, {x + y*sqrt(8)}, y);
\addplot3[z buffer=sort,surf,domain=-0.25*pi:0.75*pi,y domain=0:1]({5 + 4*y*cos(x)}, {5 + 4*y*sin(x)}, y);
\addplot3[z buffer=sort,surf,domain=0.75*pi:1.75*pi,y domain=0:1]({4*y*cos(x)}, {4*y*sin(x)}, y);
\addplot3[z buffer=sort,surf,domain=0:5,y domain=0:1]({x + y*sqrt(8)}, {x - y*sqrt(8)}, y);
\draw[black,thick] ({sqrt(8)}, {-sqrt(8)}, 1) -- ({5 + sqrt(8)}, {5 - sqrt(8)}, 1)
             ({-sqrt(8)}, {sqrt(8)}, 1) -- ({5 - sqrt(8)}, {5 + sqrt(8)}, 1);
\draw[black,thick] (0, 0, 0) -- (5, 5, 0);
\addplot3[black,thick,domain=-0.25*pi:0.75*pi,samples y=1]({5 + 4*cos(x)}, {5 + 4*sin(x)}, 1);
\addplot3[black,thick,domain=0.75*pi:1.75*pi,samples y=1]({4*cos(x)}, {4*sin(x)}, 1);
\draw[dotted,thick] ({-sqrt(8)}, {sqrt(8)}, 1) -- ({sqrt(8)}, {-sqrt(8)}, 1)
                    ({5 - sqrt(8)}, {5 + sqrt(8)}, 1) -- ({5 + sqrt(8)}, {5 - sqrt(8)}, 1)
                    (0, 0, 1) -- (5, 5, 1);
\end{axis}
\end{tikzpicture}
\caption{The surface of a trough: at fixed \(z\), we include all points no
further than~\(4z\) from~\(e\).}
\label{fig:trough}
\end{figure}
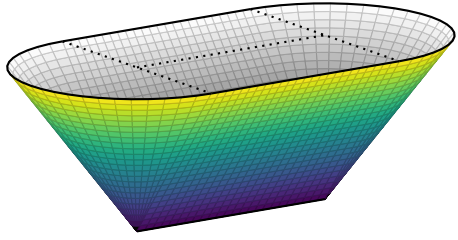

\begin{definition}\label{def:trough}
Given a segment \(e \subset \R^2\) and \(0 < \eps < 1\), define
\[\trough(e, \eps) = \{(x, y, z) \in \R^3 \mid d((x, y), e) \leq 4z \leq
\sfrac{8 \lvert e\rvert}{\eps}\}\,,\]
where \(d((x, y), e)\) is the distance from \((x, y)\) to the closest point
on \(e\).
See \cref{fig:trough}.
\end{definition}

A trough is a three-dimensional object consisting of a triangular prism and two
half-cones.
Any \(z\)-slice can be seen as a \(z\)-neighbourhood of \(e\).
We show the following \lcnamecref{lem:trough_ld}.

\begin{lemma}\label{lem:trough_ld}
Let \(P = (V, E)\) be \(\lambda\)-low density, and let \(0 < \eps < 1\).
The set \(\{\trough(e, \eps) \mid e \in E\}\) is \(k\)-low density for
\(k = \bO(\sfrac{\lambda}{\eps^2})\).
\end{lemma}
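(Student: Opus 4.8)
The plan is to take an arbitrary axis-parallel cube $H_R$ of side length $R$ in $\R^3$ and bound the number of troughs $\trough(e,\eps)$ of size at least $R$ that intersect it; this bound is exactly what \cref{def:ld_3d} requires. First I would understand the geometry of a single trough. By \cref{def:trough}, a trough lives between heights $z=0$ and $z = 2\lvert e\rvert/\eps$, so its bounding box has height $2\lvert e\rvert/\eps$; its horizontal extent at height $z$ is the $4z$-neighbourhood of $e$, so at the top it is the $8\lvert e\rvert/\eps$-neighbourhood of $e$. Hence the smallest axis-parallel enclosing cube of $\trough(e,\eps)$ has side length $\Theta(\lvert e\rvert/\eps)$ — I would pin down that it is at least $2\lvert e\rvert/\eps$ and at most $O(\lvert e\rvert/\eps)$. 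So ``size at least $R$'' translates into a lower bound $\lvert e\rvert = \Omega(\eps R)$ on the generating edge, up to constants.

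The key step is then a projection/slicing argument: a trough of size $\ge R$ that intersects $H_R$ must, at some height $z_0$ in the $R$-tall vertical slab spanned by $H_R$, have its $z_0$-slice meet the square projection of $H_R$ onto the $xy$-plane. The $z_0$-slice is the $4z_0$-neighbourhood of $e$. I would pick a convenient height inside the cube, say $z_0$ equal to the bottom of $H_R$ plus something of order $R$ (or simply argue at the height where the cube sits), and note that since the trough has size $\ge R$ it actually extends up to height $\Omega(\lvert e\rvert/\eps) \ge \Omega(R)$, so it is ``present'' throughout the relevant height range. Thus for the trough to meet $H_R$ we need the edge $e$ to come within distance $O(z_0) = O(R)$ of the projected square, i.e.\@ $e$ intersects a disk of radius $O(R)$ (equivalently $O(R/\eps)$, whichever constant is cleanest) centred at the cube. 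Combined with $\lvert e\rvert = \Omega(\eps R)$, we have a disk of radius $r' = O(R)$ that is intersected by many edges each of length $\ge 2r'' $ where $r'' = \Omega(\eps R) = \Omega(\eps r')$; rescaling, each such edge has length at least a $\Theta(\eps)$ fraction of the disk radius.

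To finish I would invoke $\lambda$-low density of $P$ (\cref{def:density}). A disk of radius $r'$ contains at most $\lambda$ edges of length $\ge 2r'$; an edge of length only $\Theta(\eps r')$ is too short for a single application, so I cover the disk of radius $r'$ by $O(1/\eps^2)$ sub-disks of radius $\Theta(\eps r')$, and apply \cref{def:density} to each — each sub-disk is intersected by at most $\lambda$ edges of length $\ge \Theta(\eps r')$. An edge that intersects the big disk and has length $\ge \Theta(\eps r')$ must intersect one of these sub-disks, so the total number of qualifying edges, hence of troughs meeting $H_R$ and of size $\ge R$, is $O(\lambda/\eps^2)$. That gives $k = O(\lambda/\eps^2)$ as claimed.

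The main obstacle I anticipate is bookkeeping the constants in the slicing step so that the ``disk of radius $O(R)$'' and the ``edge length $\Omega(\eps R)$'' come out compatible with a single clean covering argument — in particular making sure the height at which I slice is genuinely inside the cube and that the trough really does reach that height (which is where the ``size $\ge R$'' hypothesis is essential, ruling out short troughs whose narrow tips could otherwise poke into $H_R$ without the edge being close). Everything else is a routine packing estimate once the reduction to $\lambda$-low density of the underlying edge set is set up.
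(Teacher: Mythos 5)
Your overall strategy mirrors the paper's: bound the trough's enclosing cube by $\Theta(\lvert e\rvert/\eps)$, slice the picture, reduce to counting edges near the cube's projection, and finish with a $\bO(1/\eps^2)$ covering argument against $\lambda$-low density. However, there is a genuine gap in the slicing step. You pick a height $z_0$ inside $H_R$, i.e.\@ $z_0 \in [z_{\min}, z_{\min}+R]$, and then assert that the edge $e$ must come within distance $\bO(z_0)=\bO(R)$ of the projected cube. That equality only holds when the cube sits near the plane $z=0$; in general $z_{\min}$ is unbounded, and the $z_0$-slice of a trough is the $4z_0$-neighbourhood of $e$, so $e$ can be as far as $\Theta(z_{\min}+R)$ from the projection of $H_R$ and still contribute. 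Concretely, take $z_{\min}=1000R$ and $\lvert e\rvert \approx 500\eps R$: the trough tops out at exactly height $2\lvert e\rvert/\eps \approx 1000R$, has size $\Theta(\lvert e\rvert/\eps)=\Theta(500R)\ge R$, and can intersect $H_R$ with $e$ at distance $\approx 4000R$. Your disk of radius $\bO(R)$ would miss it, and your covering bound would undercount.

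The fix (which the paper's proof makes) is to track $z_{\min}$ explicitly: from $d((x,y),e)\le 4z$ and $z\le z_{\min}+R$ one gets $d(h,e) \le 5R + 4z_{\min}$ for the projected centre $h$, and from the fact that the trough reaches $H_R$ at all one gets $z_{\min}\le 2\lvert e\rvert/\eps$, which together with size $\ge R$ gives $\lvert e\rvert = \Omega\bigl(\eps(R+z_{\min})\bigr)$. Both the radius of the relevant disk and the lower bound on $\lvert e\rvert$ then scale with the same quantity $R+z_{\min}$, so their ratio is $\bO(1/\eps)$ regardless of where $H_R$ sits, and the $\bO(1/\eps^2)$ covering argument goes through. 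So the obstacle you flagged at the end is real, but it is not merely a matter of choosing constants: you must replace the scale $R$ by $R+z_{\min}$ throughout, both for the disk and for the edge-length threshold, before the covering step is valid.
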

\begin{proof}
We first bound the size of \(\trough(e, \eps)\).
Let \((x, y, z) \in \trough(e, \eps)\).
Note that \(0 \leq z \leq \sfrac{2 \lvert e\rvert}{\eps}\).
Furthermore, \(d((x, y), e) \leq \sfrac{8 \lvert e\rvert}{\eps}\), so
\((x, y)\) must lie inside a disk centred at the midpoint of \(e\) with radius
\(\sfrac{9 \lvert e\rvert}{\eps}\).
Thus \((x, y, z)\) lies inside a cube with side length
\(\sfrac{18 \lvert e\rvert}{\eps}\), which bounds the size of
\(\trough(e, \eps)\).

Let \(H_r\) be an axis-parallel cube with side length \(r\), and let its
smallest \(z\)-coordinate be \(z_{\min} \geq 0\).
Suppose \(\trough(e, \eps)\) of size at least \(r\) intersects \(H_r\), and let
\((x, y, z)\) be a point in the intersection.
Let \(h\) be the projection of the centre of \(H_r\) onto the plane \(z = 0\).
Then
\[d(h, e) \leq d(h, (x, y)) + d((x, y), e) \leq r + 4z \leq 5r + 4z_{\min}\,,\]
where the first step follows by the triangle inequality, the second by
\((x, y, z)\) lying in the intersection, and the third by
\(z \leq z_{\min} + r\).
Furthermore, the size of \(\trough(e, \eps)\) is at least \(r\) and at most
\(\sfrac{18 \lvert e\rvert}{\eps}\), so
\(r \leq \sfrac{18 \lvert e\rvert}{\eps}\); and
\(z_{\min} \leq \sfrac{2 \lvert e\rvert}{\eps}\).
Therefore, \(5r + 4z_{\min} \leq \sfrac{98 \lvert e\rvert}{\eps}\).

So we know \(\lvert e\rvert \geq (5r + 4z_{\min}) \cdot \sfrac{\eps}{98}\).
By \(\lambda\)-low-density property, any ball with diameter
\((5r + 4z_{\min}) \cdot \sfrac{\eps}{98}\) is intersected by at most
\(\lambda\) such edges.
Consider a disk in \(z = 0\) with diameter \(2\cdot(5r + 4z_{\min})\) centred
at \(h\).
It can be covered by \(\sfrac{c}{\eps^2}\) smaller disks for a constant \(c\),
so there may be at most \(k = \sfrac{c\lambda}{\eps^2}\) edges close enough to
\(h\) for \(H_r\) to intersect their troughs; and so the set of troughs is
\(k\)-low density.
\end{proof}

Using the range searching data structure for low-density sets by Schwarzkopf
and Vleugels~\cite{ld_search} on the set of troughs of all edges, we obtain the
following result.

\begin{lemma}\label{lem:ld_ds}
Let \(P = (V, E)\) be a \(\lambda\)-low-density \(t\)-spanner, let
\(0 \leq \eps \leq 1\), and let \(F = \{f \in \R^2 \mid f \in e, e \in E\}\).
In \(\bO(n^2 \log n + \sfrac{\lambda}{\eps^2} \cdot n \log n)\) time and using
\(\bO(n \log^2 n + n \cdot \sfrac{\lambda}{\eps^2})\) space, we can construct
a data structure that, given a query square \(S\) in the plane with side length
\(2r\), returns a set of vertices \(T\) satisfying \cref{lem:ld_points} in time
\(\bO(\log^2 n + \sfrac{t^2}{\eps^2} + \sfrac{\lambda}{\eps^3})\).
\end{lemma}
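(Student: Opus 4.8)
The statement has three independent components—construction time, space, and query time—and each follows by plugging the size bound of \cref{lem:ld_points} into the generic machinery already used in \cref{lem:cluster_ds} (the clustering data structure of Gudmundsson et al.). So the plan is essentially: (1) build the Gonzalez clustering structure exactly as in \cref{lem:cluster_ds}; (2) additionally build the range-searching structure of Schwarzkopf and Vleugels on the set of troughs $\{\trough(e,\eps)\mid e\in E\}$, which is $k$-low density with $k=\bO(\sfrac{\lambda}{\eps^2})$ by \cref{lem:trough_ld}; (3) on a query square $S$ of side length $2r$, use the clustering structure to retrieve the $\bO((\sfrac{t}{\eps})^2)$ candidate vertices of $T_1$, and use the trough structure to retrieve the relevant long edges intersecting the enlarged square $S'$, then emit $\bO(\sfrac{1}{\eps})$ evenly spaced sample points on each such edge to form $T_2$; return $T=T_1\cup T_2$. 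Correctness of the returned set is exactly \cref{lem:ld_points}, so nothing new is needed there.

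First I would handle the trough structure. The point of the trough construction is that a query square $S$ of side length $2r$, when lifted to the horizontal slab at height $z\approx r$ in $\R^3$, intersects $\trough(e,\eps)$ precisely when $e$ is a ``long enough'' edge (length $\gtrsim \eps r$) whose $\eps r$-neighbourhood meets $S'$—this is the content of the computation inside the proof of \cref{lem:ld_points} and \cref{lem:trough_ld}. By \cref{lem:trough_ld} this set of troughs is $\bO(\sfrac{\lambda}{\eps^2})$-low density, and Schwarzkopf and Vleugels' data structure for low-density sets supports a range query with a cube in $\R^3$ reporting all intersected objects; with $N$ objects and low-density parameter $k$ it builds in $\bO((N+k)\log N)$-ish time, uses $\bO(N\log N + Nk)$ space, and answers a query in $\bO(\log^2 N + \text{output})$ time. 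Plugging $N=n$ and $k=\bO(\sfrac{\lambda}{\eps^2})$ gives the trough-structure contributions $\bO(\sfrac{\lambda}{\eps^2}\,n\log n)$ to construction, $\bO(n\log^2 n + n\sfrac{\lambda}{\eps^2})$ to space, and $\bO(\log^2 n + \sfrac{\lambda}{\eps^2})$ to the query for reporting the edges; the $\bO(\sfrac{1}{\eps})$ samples per reported edge add $\bO(\sfrac{\lambda}{\eps^3})$ to the query time. Combining with the $\bO(n^2\log n)$ / $\bO(n\log n)$ / $\bO(\log n + \sfrac{t^2}{\eps^2})$ costs inherited verbatim from \cref{lem:cluster_ds} for $T_1$ yields precisely the three bounds claimed.

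The one point requiring a little care—and the place I expect the real obstacle to be—is making sure the trough query returns exactly the edges whose samples we actually want, rather than a superset that blows up the output size. We need: every edge of length $\geq \eps r$ whose trace meets $S'$ is reported (so that $T_2$ really $\eps r$-covers $F\cap S$ as in \cref{lem:ld_points}), but also not too many spurious long-but-distant edges. The trough shape is engineered so that $d((x,y),e)\le 4z\le \sfrac{8|e|}{\eps}$; at query height $z=\Theta(r)$ the lower inequality is the ``within $\Theta(r)$ of $e$'' condition and the upper one is exactly ``$|e|\gtrsim \eps r$''. Choosing the query cube's height and size as a fixed constant multiple of $r$ (and centred over the centre of $S$), one checks that the reported set is sandwiched between the edges genuinely needed and the set of long edges meeting $\bO(r)$-neighbourhood of $S$, and the latter has size $\bO(\sfrac{\lambda}{\eps^2})$ by the low-density argument of \cref{lem:trough_ld}. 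Once that containment is verified, the output size is $\bO(\sfrac{\lambda}{\eps^2})$ edges, each contributing $\bO(\sfrac{1}{\eps})$ points, and the query-time term $\bO(\log^2 n + \sfrac{t^2}{\eps^2} + \sfrac{\lambda}{\eps^3})$ falls out; I would not grind through the exact constants, since they are irrelevant to the asymptotic claim and already implicit in the cited lemmas.
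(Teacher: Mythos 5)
Your proposal follows exactly the same route as the paper, which itself proves this lemma by simply combining \cref{lem:cluster_ds} for the vertex set $T_1$ with the Schwarzkopf--Vleugels low-density range-searching structure on the troughs for $T_2$, using \cref{lem:trough_ld} to bound the density and \cref{lem:ld_points} for correctness; the paper in fact gives no further proof details. One small inconsistency in your write-up: the ``$-$ish'' bounds you quote for the Schwarzkopf--Vleugels structure ($\bO((N+k)\log N)$ construction, $\bO(N\log N + Nk)$ space) do not actually yield the contributions $\bO(\sfrac{\lambda}{\eps^2}\, n\log n)$ and $\bO(n\log^2 n + n\sfrac{\lambda}{\eps^2})$ that you then write down and that the lemma claims -- the correct cited bounds should carry $\bO(kn\log n)$ construction and $\bO(n\log^2 n + kn)$ space -- but this is a misquotation of the reference, not a gap in the approach.
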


Finally, we obtain the main result of the paper.
The proof of Gudmundsson et al.~\cite[Theorem~6.1]{prev} applies here directly,
but we use the data structures in \cref{lem:grid_all,lem:ld_ds}.

\thmmain*
\begin{proof}
See the proof by Gudmundsson et al.~\cite[Theorem~6.1]{prev}, but we use
\cref{lem:grid_all,lem:ld_ds} instead.
We analyse the space and time requirements.
For preprocessing and space, we construct the two data structures.
For distance queries, first analyse the decision version.

We query the data structure of \cref{lem:ld_ds} \(m\) times to obtain the
candidate points.
Then we construct a directed graph with
\(\bO(m \cdot (\sfrac{\lambda}{\eps^3} + \sfrac{t^2}{\eps^2})^2)\) edges.
For each edge, we do a constant number of queries to the data structure of
\cref{lem:grid_all}, each taking \(\bO(\sqrt{n}\cdot\sfrac{\lambda}{\eps})\)
time.
Finally, we decide if there is a suitable directed path in the graph.
Overall, the decision version takes \(\bO(m\sqrt{n}\cdot \sfrac{\lambda}{\eps}
\cdot(\sfrac{\lambda}{\eps^3} + \sfrac{t^2}{\eps^2})^2)\) time.

For the optimisation version, we apply parametric search using
\(N_P = m\sqrt{n}\) parallel processors.
The sequential version runs in the same time as the decision version, so
\(T_S = \bO(m\sqrt{n}\cdot \sfrac{\lambda}{\eps} \cdot
(\sfrac{\lambda}{\eps^3} + \sfrac{t^2}{\eps^2})^2)\).
In the parallel version, the steps for each of \(m\) points can be executed in
parallel; and finding the weight of an edge by querying the distance data
structure can be done with \(\sqrt{n}\) processors, each performing
\(\bO(\sfrac{\lambda}{\eps})\) amount of work, then combining the values to
find the minimum in \(\bO(\log n)\) time.
Thus, \(T_P = \bO(\log^2 n + (\sfrac{\lambda}{\eps} + \log n)\cdot
(\sfrac{\lambda}{\eps^3} + \sfrac{t^2}{\eps^2})^2)\).
The time for the optimisation version is now
\(\bO(N_P T_P + T_P T_S \log N_P)\).
Let \(\varphi = (\sfrac{\lambda}{\eps^3} + \sfrac{t^2}{\eps^2})^2\); then the
query time is
\[\bO\big(\varphi\cdot\sfrac{\lambda}{\eps} \cdot m\sqrt{n}\log mn \cdot
(\log^2 n + \varphi \log n + \varphi\cdot\sfrac{\lambda}{\eps})\big)\,.\]
Treating \(t\) and \(\lambda\) as constant, we get
\(\bO(\eps^{-7} \cdot m\sqrt{n}\log mn \cdot
(\log^2 n + \eps^{-6} \log n + \eps^{-7}))\); and treating also \(\eps\) as a
constant, the query time becomes \(\bO(m\sqrt{n}\log mn \cdot \log^2 n)\).

Finally, for the reporting query, we first run the distance query and record
the path in the graph, as well as the edges aligned with the query curve
vertices, and record the optimal order of endpoints of these edges.
Now we can simply perform the individual segment reporting queries, as before,
costing us extra \(\bO(\sfrac{\ell}{\eps})\) time for a path of
length~\(\ell\).
\end{proof}

\section{Conclusion}\label{sec:concl}
Given a \(\lambda\)-low-density \(t\)-spanner, we construct a data structure of
\(\tilde\bO(n^{\sfrac{3}{2}})\) size that answers \((1 + \eps)\)-approximate
map-matching distance queries in \(\tilde\bO(m \sqrt n)\) time, where \(n\)~is
the complexity of the graph, \(m\)~is the complexity of the query curve, and
\(\tilde\bO(\cdot)\) hides dependence on~\(\lambda\),~\(t\),~\(\eps\), and
logarithmic factors in~\(n\) and~\(m\).
Our data structure can also report the matched path in
\(\tilde\bO(m \sqrt n + \ell)\) time, where~\(\ell\) is the length of the path.

We believe that real-life world networks are \(\lambda\)-low-density and
\(t\)-spanners.
Chen et al.~\cite{refs_frechet3} verified that~\(\lambda < 30\) for road
networks in San Francisco, Athens, Berlin, and San Antonio, whereas
Nordbeck~\cite{Nordbeck64} verified that~\(t < 1.5\) for a sampled section of
Trollhätten.
Still, one may envision a scenario where \(t\)~is a larger constant, say, for
points on the opposite sides of a poorly bridged river or across national
borders.
It may be interesting to see if a different construction may require a weaker
assumption.
In general, we believe that \(\lambda\)~and~\(t\) are small constants for many
real-life road networks, but verifying this remains an open problem.

We restate several open problems posed by Gudmundsson et al.~\cite{prev} which
are also relevant to our result.
Is it possible to construct a data structure where we can select~\(\eps\) at
query time, rather than at preprocessing time?
Can we avoid using parametric search in the query procedure, to make the data
structure more practical?
Can we prove a lower bound to rule out a much more efficient data structure?

\bibliography{references}
\end{document}